\documentclass[letterpaper,12pt]{article}

\usepackage{amsmath}
\usepackage{amssymb}
\usepackage{graphicx}
\usepackage{url}
\usepackage{natbib}
\usepackage{multirow}
\usepackage{lscape}
\usepackage{amsthm}
\usepackage[margin=1in]{geometry}
\usepackage{mathtools}
\usepackage{tikz}
\usepackage{authblk}
\usepackage{subcaption}
\usepackage{ifthen}
\usepackage{placeins}
\usepackage{longtable}
\usepackage{booktabs}

\theoremstyle{plain}
\newtheorem{theorem}{Theorem}[section]
\newtheorem{corollary}[theorem]{Corollary}
\newtheorem{proposition}[theorem]{Proposition}

\theoremstyle{remark}
\newtheorem{remark}[theorem]{Remark}

\usetikzlibrary{shapes,decorations,arrows,calc,arrows.meta,fit,positioning}

\renewcommand\hat{\widehat}

\newcommand{\doo}{\mathrm{do}}

\newcommand{\Pa}{\mathrm{pa}}

\newcommand{\rec}[1][]{%
  \ifthenelse{ \equal{#1}{} }
    {\mathrm{re}}
    {\mathrm{re}_{#1}}
}
\newcommand{\emi}[1][]{%
  \ifthenelse{ \equal{#1}{} }
    {\mathrm{em}}
    {\mathrm{em}_{#1}}
}

\newcommand{\An}[1][]{%
  \ifthenelse{ \equal{#1}{} }
    {\mathrm{an}}
    {\mathrm{an}_{#1}}
}
\newcommand{\De}[1][]{%
  \ifthenelse{ \equal{#1}{} }
    {\mathrm{de}}
    {\mathrm{de}_{#1}}
}

\newcommand{\V}{\mathrm{V}}
\newcommand{\E}{E}

\newcommand{\logit}{\mathrm{logit}}

\begin{document}

\tikzset{
    -Latex,auto,node distance =1 cm and 1 cm,semithick,
    state/.style ={circle, draw, minimum width = 0.7 cm},
    box/.style ={rectangle, draw, minimum width = 0.7 cm, fill=lightgray},
    point/.style = {circle, draw, inner sep=0.08cm,fill,node contents={}},
    bidirected/.style={Latex-Latex,dashed},
    el/.style = {inner sep=3pt, align=left, sloped}
}

\title{Graph-based causal variance decompositions: When ``variance explained'' means causation}

\author[1]{Olli Saarela\thanks{Correspondence to: Olli Saarela, Dalla Lana School of Public Health, University of Toronto, 155 College St, Toronto, ON, Canada M5T 3M7. Email: olli.saarela@utoronto.ca}}
\author[2]{Juha Karvanen}

\affil[1]{University of Toronto}
\affil[2]{University of Jyv\"askyl\"a}

\date{\today}

\maketitle

\begin{abstract}
Recursive application of the law of total variance decomposes the marginal variance of an outcome into components attributed to explanatory variables and a residual component. The resulting decomposition depends on the chosen conditioning order, and its components do not in general have causal interpretations. We develop a graph-based framework for defining causal counterparts of ordered variance components and establishing their identification from observed data. Under topological orderings, identification can be assessed component by component against the full causal graph, without requiring the variables included in the decomposition to form a causally sufficient system. We also consider scientifically motivated departures from topological orderings, in which selected intermediate variables are conditioned on to obtain controlled-effect interpretations, motivated by the context of disparities in healthcare delivery. We relate the resulting estimands to causal attribution and variable-importance approaches in machine learning, and propose model-based plug-in estimators together with an approximate Bayesian procedure for uncertainty quantification. A simulation study examines finite-sample performance and sensitivity to outcome-model misspecification and flexible machine-learning estimation.
\end{abstract}

\noindent {\small \textbf{Keywords:} Causal attribution, Causal identification, Causal variance decomposition, Variable importance, Directed acyclic graphs}

\section{Introduction}

Understanding the sources of variation in an outcome is a central objective of statistical analysis. Variance decompositions are among the oldest tools for quantifying ``explained variation'', but in multivariable settings they are neither unique nor automatically causal. In general, once more than one source of variation is involved, a multiway decomposition depends on the order of conditioning, and different orders encode different scientific questions. This point was made explicitly by \citet{bowsher2012identifying}, who developed a general sequential-conditioning decomposition for stochastic biochemical networks and emphasized both the non-uniqueness of multiway variance decompositions and the substantive role of the chosen conditioning order.

In the context of institutional quality-of-care comparisons in healthcare, \citet{chen2020causal} proposed a three-way causal variance decomposition for patient-level outcomes, partitioning the observed marginal outcome variance into a component explained by patient case-mix, a component causally explained by hospital performance conditional on case-mix, and a residual component. Their key contribution was to define the decomposition directly in terms of potential outcomes, so that the hospital-related variance component is a population-level causal estimand. The framework was subsequently extended to mediation \citep{chen2022causal}, decomposing between-hospital variation into contributions operating through a process-of-care mediator and through other pathways, and to hierarchical provider structures \citep{chen2023hierarchical}, separating between-hospital and within-hospital between-provider variation. Although these decompositions were developed for a particular applied setting rather than from a general graph-theoretic construction, their successive conditioning can be viewed as following a topological ordering of the variables in the corresponding causal DAGs.

Following the same line of work, \citet{yu2025causal} developed a causal variance decomposition for detecting group disparities in hospital profiling. Their eight-way decomposition separates contributions related to group and hospital effects, effect modification of the hospital effect by group membership, hospital access or selection, case-mix, and residual variation. Motivated by disparity questions that condition on ``allowable'' characteristics, they considered a modified conditioning order in which case-mix covariates precede group membership, and further decomposed the resulting terms into pathway- and effect-modification-specific contributions. Here conditioning on the allowable covariates on the causal pathway gives a ``controlled'' interpretation for the resulting group effects, and gives an example where changing the order of conditioning changes the causal question represented by the variance decomposition.

A connection to variable importance in machine learning was pointed out by \citet{khan2025marginal}. Rather than constructing a sequential decomposition by a causal ordering, their conditional variable importance metric for a predictor $X_j$ conditions on all remaining explanatory variables $X_{-j}$. When $X_j$ is interpreted as a treatment or exposure and $X_{-j}$ constitutes a valid confounder adjustment set, the metric can be represented in terms of conditional average treatment effects. For a categorical exposure, the hospital-related component in the causal variance decomposition of \citet{chen2020causal} is one half of this conditional variable importance metric. However, because \citet{khan2025marginal} defined importance of each predictor relative to the full set of the remaining predictors, they cannot obtain simultaneous causal interpretation for the importance measures of all the variables. A central contribution of \citet{khan2025marginal} was to highlight how marginal variable-importance measures may require extrapolation beyond well-supported covariate regions, and to propose conditional variable importance as a solution, closely connecting this problem to positivity violations in causal inference. Related recent work has developed variable-importance measures specifically for treatment-effect heterogeneity, together with formal inferential procedures. \citet{hines2025variable} propose nonparametric treatment-effect variable-importance measures with influence-function-based estimation and inference, while \citet{paillard2025measuring} develop a conditional permutation approach for statistically assessing the importance of variables in explaining heterogeneity in conditional average treatment effects.

Recent work in machine learning has pursued a more general notion of causal attribution. \citet{janzing2024quantifying} introduced intrinsic causal contributions (ICCs), defined in structural causal models by resolving each observed node into its exogenous noise contribution and interpreting the resulting attribution through structure-preserving interventions. For a fixed ordering, a variance-based ICC is a reduction in expected conditional variance when one additional node-specific noise variable is included in the conditioning set. Their principal construction removes dependence on the ordering through Shapley symmetrization over all permutations. They also consider an asymmetric variant obtained by restricting the averaging to topological orderings. For a fixed topological ordering, the ICC can be expressed using the observed variables that precede each node, rather than the underlying exogenous noise variables, which connects ICC to terms in a sequential observed-data variance decomposition. The main difference is that ICC attributes all uncertainty in the target to the nodes in the causal system, whereas the Chen et al. style decomposition retains a residual component representing outcome variation not explained by the variables included in the decomposition.

\citet{saha2025measuring} developed the topological-order ICC further for global attribution in deep neural networks. They note that a DAG may admit several topological orderings and define a topological ICC by averaging the attribution over all valid topological orderings. Under an assumption of no latent confounding, they use the observed-variable representation of topological ICC to establish identifiability and develop a generative post-hoc estimation framework. Their work therefore provides an identifiable general construction for intrinsic causal attribution under topological orderings, while retaining Shapley-style averaging to resolve non-uniqueness of the ordering.
\citet{jung2022measuring} proposed do-Shapley values, which decompose an intervention-specific expected outcome into contributions of individual causes, and developed graphical identification conditions and robust estimators from observational data. These targets differ from the population-level variance components considered here, but similarly separate the definition of causal contributions from their identification and estimation.

A related line of work develops causal analogues of functional analysis of variance (ANOVA) and global sensitivity analysis. \citet{gao2024counterfactual} propose ``counterfactual explainability'', using a DAG and counterfactual comparisons to extend ANOVA-type attribution to dependent explanatory variables. \citet{zhang2026causal} subsequently develop semiparametric estimation and inference for causal explainability quantities, including influence-function-based estimation and inference for components and interactions. These approaches further demonstrate that explained variation can be formulated as a causal estimand, although their counterfactual ANOVA targets and identifying assumptions differ from the sequential conditional-variance components considered here. More generally, graph-based causal reasoning is now routine \citep{pearl2009causality}, with practical software support for DAG representation and adjustment analysis available \citep[e.g.][]{dagitty,Tikka:identifying}.

Taken together, this literature leaves a gap between application-specific causal variance decompositions and general causal attribution frameworks. The decompositions of \citet{chen2020causal,chen2022causal,chen2023hierarchical} and \citet{yu2025causal} establish causally interpretable variance components for particular DAGs and scientific questions, but do not provide general graphical conditions under which an arbitrary component of an ordered observed-data variance decomposition has a causal interpretation. The conditional variable importance measure of \citet{khan2025marginal} does not produce a sequential multi-component decomposition. Conversely, \citet{janzing2024quantifying,saha2025measuring} provide general graph-based attribution constructions, but focus on intrinsic node-wise contributions and resolve non-unique orderings through averaging. Existing results also typically formulate the topological case under unconfounded or causally sufficient models, whereas in applications the variables included in the decomposition may be only an observed subset of a larger causal system containing unmeasured variables. This motivates a framework in which causal interpretation and identification are assessed component by component against the full graph, and in which alternative scientifically chosen conditioning orders can themselves define substantively different estimands.

We address this gap by developing a general graph-based theory for ordered variance decompositions of stochastic outcomes. We first characterize the causal counterparts of components arising from topological orderings and give graphical conditions under which these causal components are identified by the corresponding observed-data variance components. Causal sufficiency with respect to the observed variables is not assumed: when the decomposition includes only a subset of the variables in a larger causal system, identification can be assessed separately for each component using the full causal graph. We then consider scientifically motivated departures from the topological order and characterize the resulting controlled-effect causal components, giving graphical conditions for their identification that distinguish preceding variables that must be jointly controlled from those that can serve as an adjustment set. This is motivated by disparity or fairness related applications where controlling for descendant or intermediate variables is relevant to the substantive question \citep{karvanen2024simulating,yu2025causal}. Also, rather than averaging over alternative admissible orderings, we treat the ordering and the granularity of the nodes as part of the definition of the scientific estimand, using vector-valued clusters when substantively parallel variables are more naturally attributed jointly. 

Finally, we develop plug-in estimation and approximate Bayesian uncertainty quantification for these decompositions and investigate their finite-sample behavior, including sensitivity to outcome-model misspecification and flexible machine-learning estimation.

\section{Ordered variance decomposition}

\subsection{Notation and ordered variance decomposition}

Let $X_1 \prec X_2 \prec \cdots \prec X_k \prec Y$ denote an ordering of the explanatory random variables $X_1,\ldots,X_k$ and the outcome $Y$. Write
$X_{1:j}:=(X_1,\ldots,X_j)$, $X_{1:0}:=\varnothing$, and $X_{\prec j}:=X_{1:j-1}$ for the variables preceding $X_j$ in the chosen ordering. The ordering within $X_{1:k}$ can be chosen while $Y$ always comes last in the order. Applying the law of total variance sequentially along this ordering gives
\begin{align}
V(Y)
&=
\sum_{j=1}^k
\E_{X_{\prec j}}
\left[
V_{X_j\mid X_{\prec j}}
\left\{
\E(Y\mid X_{1:j})
\right\}
\right]
+
\E_{X_{1:k}}
\left[
V(Y\mid X_{1:k})
\right].
\label{eq:ordered_variance_decomposition}
\end{align}
For $j=1$, the expectation over the empty preceding set is understood to be
the identity operator.

We denote the component attributed to $X_j$ under the chosen ordering by
\[
\Delta_j
=
\E_{X_{\prec j}}
\left[
V_{X_j\mid X_{\prec j}}
\left\{
\E(Y\mid X_{1:j})
\right\}
\right],
\]
and the residual component by
\[
\Delta_{\mathrm{res}}
=
\E_{X_{1:k}}
\left[
V(Y\mid X_{1:k})
\right].
\]
Thus,
\[
V(Y)
=
\sum_{j=1}^k\Delta_j+\Delta_{\mathrm{res}}.
\]

This decomposition is a probabilistic identity and is valid for any specified
ordering of the explanatory variables. The choice of ordering determines how
the explained variation is allocated among the components; a causal
interpretation of an individual component requires additional assumptions,
which we consider in Section~\ref{sec:causal_identification}.

\subsection{Examples}

As a running example that will be expanded later, consider the DAG in Figure~\ref{fig:dag}, adapted from \citet{naimi2016mediation}, where $Z$ is a categorical sociodemographic group indicator, $A$ is a categorical hospital assignment indicator, $\boldsymbol X$ is a vector of clinical case-mix covariates, $Y$ is a process-type quality-of-care outcome (e.g. timely treatment delivery) and $U$ is a possibly present unobserved common cause of $\boldsymbol X$ and $Y$. We treat $\boldsymbol X$ as a vector-valued cluster node \citep{tikka2023clustering}. Baseline covariates could be included but we omit them to simplify notation.

\begin{figure}[h]
\begin{center}
\begin{tikzpicture}
    \node[state] (z) at (0,0) {$Z$};
    \node[state] (a) [right =of z] {$A$};
    \node[state] (y) [right =of a] {$Y$};
    \node[state] (x) [above =of a] {$\boldsymbol X$};

    \node[state, draw=gray, text=gray, dashed] (u)
        [above =of x] {$U$};

    \path (z) edge (x);
    \path (z) edge (a);
    \path (z) edge[bend right=45] (y);
    \path (x) edge (a);
    \path (x) edge (y);
    \path (a) edge (y);

    \path[gray, dashed] (u) edge (x);
    \path[gray, dashed] (u) edge[bend left=20] (y);
\end{tikzpicture}
\end{center}
\caption{Running example with sociodemographic group $Z$, case-mix
covariates $\boldsymbol X$, hospital assignment $A$, and outcome $Y$.
The gray dashed node $U$ represents an unobserved common cause of
$\boldsymbol X$ and $Y$. When $U$ is absent, the observed-variable DAG
is causally sufficient.}
\label{fig:dag}
\end{figure}

We consider first the special case where $U$ is absent. The ordering $Z \prec \boldsymbol X \prec A \prec Y$ is a topological ordering of the resulting DAG. The corresponding ordered variance decomposition is
\begin{align}
V(Y)
&=
V_Z\left\{\E(Y\mid Z)\right\}
\nonumber\\
&\quad+
\E_Z\left[
V_{\boldsymbol X\mid Z}
\left\{\E(Y\mid Z,\boldsymbol X)\right\}
\right]
\nonumber\\
&\quad+
\E_{Z,\boldsymbol X}\left[
V_{A\mid Z,\boldsymbol X}
\left\{\E(Y\mid Z,\boldsymbol X,A)\right\}
\right]
\nonumber\\
&\quad+
\E_{Z,\boldsymbol X,A}
\left[
V(Y\mid Z,\boldsymbol X,A)
\right].
\label{eq:example_ordered_decomposition_simplified}
\end{align}
The first three terms are the variance components attributed to $Z$,
$\boldsymbol X$, and $A$, respectively, under this ordering, and the final
term is the residual variance after conditioning on $(Z,\boldsymbol X,A)$.

For the same DAG, consider instead the modified-order $\boldsymbol X \prec Z \prec A \prec Y$. The corresponding ordered variance decomposition is
\begin{align}
V(Y)
&=
V_{\boldsymbol X}
\left\{\E(Y\mid\boldsymbol X)\right\}
\nonumber\\
&\quad+
\E_{\boldsymbol X}\left[
V_{Z\mid\boldsymbol X}
\left\{\E(Y\mid\boldsymbol X,Z)\right\}
\right]
\nonumber\\
&\quad+
\E_{\boldsymbol X,Z}\left[
V_{A\mid\boldsymbol X,Z}
\left\{\E(Y\mid\boldsymbol X,Z,A)\right\}
\right]
\nonumber\\
&\quad+
\E_{\boldsymbol X,Z,A}
\left[
V(Y\mid\boldsymbol X,Z,A)
\right].
\label{eq:example_modified_ordered_decomposition_simplified}
\end{align}
Here the first three terms are attributed to $\boldsymbol X$, $Z$, and $A$,
respectively. Although this ordering is not topological, the variance
decomposition remains a valid probabilistic identity. Whether its individual
components admit causal interpretations is a separate identification question,
considered in the following section.

\section{Causal identification}\label{sec:causal_identification}

\subsection{Causal identification under topological ordering}

The ordered variance decomposition in
\eqref{eq:ordered_variance_decomposition} shows that the $j$th variance
component attributable to $X_j$ is determined by the inner conditional mean
$\E(Y\mid X_j,X_{\prec j})$, where $X_{\prec j}$ is the conditioning set.
The following proposition connects this to the corresponding causal quantity.
In what follows, we call $S$ a valid adjustment set for the total effect of a possibly multivariate exposure $E$ on $Y$ if adjustment for $S$ identifies the
interventional distribution under $\doo(E)$ according to the graphical adjustment criterion in $G$. 
For a univariate exposure $X$, a sufficient condition is the usual back-door
criterion: $S$ contains no descendants of $X$ and blocks every back-door
path from $X$ to $Y$.

\begin{proposition}[Componentwise identification under a topological ordering]
\label{prop:topological_identification}

Let $G$ be a causal DAG that may contain variables not included in the
variance decomposition, including unobserved variables. Let
$X_1,\ldots,X_k,Y$ be observed variables, ordered as
$X_1\prec\cdots\prec X_k\prec Y$, with this ordering compatible with a
topological ordering of these variables in $G$.

For any $j=1,\ldots,k$, if $X_{\prec j}$ is a valid adjustment set for
the total effect of $X_j$ on $Y$ in the full causal graph $G$, then
\[
\E(Y\mid X_j,X_{\prec j})
=
\E(Y\mid \doo(X_j),X_{\prec j}),
\]
and hence
\[
\Delta_j
=
\E_{X_{\prec j}}
\left[
V_{X_j\mid X_{\prec j}}
\left\{
\E(Y\mid X_j,X_{\prec j})
\right\}
\right]
=
\E_{X_{\prec j}}
\left[
V_{X_j\mid X_{\prec j}}
\left\{
\E(Y\mid \doo(X_j),X_{\prec j})
\right\}
\right].
\]
Thus, whenever the adjustment condition holds for component $j$, its
causal counterpart is identified by the corresponding observed-data
variance component.
\end{proposition}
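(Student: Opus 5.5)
The argument has two steps. First we establish the pointwise equality of conditional means. Then we observe that the variance component is a functional of that conditional mean and of the observational law of $(X_{\prec j},X_j)$ only, so the equality carries over.

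\emph{Step 1.} By assumption, $X_{\prec j}$ is a valid adjustment set for the effect of $X_j$ on $Y$ in $G$. We read the interventional quantity $\E(Y\mid \doo(X_j=x),X_{\prec j}=s)$ as the $s$-stratum of the adjusted effect. A valid adjustment set gives
\[
p(y\mid \doo(x))=\sum_s p(y\mid x,s)\,p(s).
\]
We need the stratum-specific version, $p(y\mid \doo(x),s)=p(y\mid x,s)$. The topological ordering makes this available: every variable in $X_{\prec j}$ precedes $X_j$ in a topological order of $G$, so none is a descendant of $X_j$. Intervening on $X_j$ therefore leaves the marginal law of $X_{\prec j}$ unchanged, $p(s\mid \doo(x))=p(s)$. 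In the standard back-door case we then invoke Rule 2 of the do-calculus. Because $X_{\prec j}$ blocks all back-door paths and contains no descendants of $X_j$, we have $(Y\perp X_j\mid X_{\prec j})$ in $G_{\underline{X_j}}$, and hence $p(y\mid \doo(x),s)=p(y\mid x,s)$.

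For the general adjustment criterion, which may admit sets that fail the back-door condition, we argue as follows. Take the valid adjustment formula and apply it with an arbitrary additional restriction $X_{\prec j}\in B$. Since $X_{\prec j}$ consists of non-descendants of $X_j$, the adjustment criterion yields $p(y,s\mid \doo(x))=p(y\mid x,s)p(s)$. This is the standard form of the adjustment criterion (Shpitser et al.; Perković et al.), which states the joint identity for covariates that are non-descendants. Dividing by $p(s)=p(s\mid\doo(x))$ gives the conditional identity. Taking expectations over $y$ then gives the equality of conditional means, for $(x,s)$ in the support of the observational joint.

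\emph{Step 2.} Substitute the pointwise equality inside $V_{X_j\mid X_{\prec j}}\{\cdot\}$. The outer variance and expectation are taken with respect to the observational distribution of $X_j$ given $X_{\prec j}$ and of $X_{\prec j}$. This is exactly how the causal counterpart is defined in the statement: the interventional conditional mean is averaged over the observed law. The two displayed expressions for $\Delta_j$ therefore coincide.

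Note that nothing about the other components $i\neq j$ is used. The hypothesis concerns only $j$, and the variables outside the decomposition, including unobserved ones, enter only through the validity of the adjustment set in the full $G$. This is the componentwise claim.

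The main obstacle is Step 1 in the general (non-back-door) adjustment case. There the textbook statement is marginal, and we must justify the conditional, stratum-specific identity. My plan is to rely on the joint form of the adjustment criterion together with the non-descendant property that the topological ordering guarantees. If that form cannot be cited cleanly, I will fall back on the back-door/Rule 2 argument and note that it covers the sufficient condition stated before the proposition.
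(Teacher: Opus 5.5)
Your proposal is correct and follows the paper's proof: Rule~2 of the do-calculus applied to $Y\perp X_j\mid X_{\prec j}$ in $G_{\underline{X_j}}$, followed by substitution of the resulting conditional-mean identity into the $j$th variance component. The detour through a ``joint form'' of the general adjustment criterion is unnecessary. For a single exposure and a candidate set of non-descendants, which the topological ordering guarantees, the adjustment criterion coincides with the back-door criterion: every non-causal path leaving $X_j$ along an outgoing edge contains a collider that is a descendant of $X_j$ and is therefore blocked. So your Rule~2 argument already covers the full hypothesis, not just the sufficient condition.
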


\begin{proof}[Proof]
Fix $j\in\{1,\ldots,k\}$. By assumption, $X_{\prec j}$ is a valid
adjustment set for the total effect of $X_j$ on $Y$ in the full causal
graph $G$. Hence $Y\perp X_j\mid X_{\prec j}$ in $G_{\underline{X_j}}$,
where $G_{\underline{X_j}}$ denotes the graph obtained by deleting arrows
emanating from $X_j$. By Rule 2 of do-calculus (action/observation
exchange; \citealp{pearl2009causality}),
$p(y\mid \doo(x_j),x_{\prec j}) = p(y\mid x_j,x_{\prec j})$,
and therefore $\E(Y\mid \doo(X_j),X_{\prec j}) = \E(Y\mid X_j,X_{\prec j})$.
Substituting this identity into the observed-data variance component gives
the identification result for $\Delta_j$.
\end{proof}

Whether $X_{\prec j}$ is a valid adjustment set can be checked graphically for each component, for example using \texttt{isAdjustmentSet} function in \texttt{dagitty} package \citep{dagitty} (examples to follow). In the special case where $G$ is causally sufficient, no check is needed because of the following.

\begin{corollary}[Causally sufficient special case]
\label{cor:topological_causal_sufficiency}

If $G$ is a causally sufficient DAG with nodes
$X_1,\ldots,X_k,Y$, and
$X_1\prec\cdots\prec X_k\prec Y$ is a topological ordering, then, for every
$j=1,\ldots,k$, $X_{\prec j}$ is a valid adjustment set for the total
effect of $X_j$ on $Y$. Consequently, the adjustment condition in
Proposition~\ref{prop:topological_identification} holds for every
non-residual component, and each causal variance component is identified by
the corresponding observed-data variance component.
\end{corollary}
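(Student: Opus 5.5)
The plan is to verify, for each fixed $j$, the two conditions of the back-door criterion for $S=X_{\prec j}$ relative to the exposure $X_j$ and outcome $Y$ in $G$. Once these hold, the criterion stated just before Proposition~\ref{prop:topological_identification} is a sufficient condition for a valid adjustment set. The second claim of the corollary then follows by applying Proposition~\ref{prop:topological_identification} componentwise for $j=1,\ldots,k$. Since $G$ is causally sufficient with node set exactly $\{X_1,\ldots,X_k,Y\}$, no node lies outside the decomposition, and every path in $G$ consists only of the ordered variables.

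\textbf{Step 1 (no descendants).} Since the ordering is topological, every directed edge $X_i\to X_l$ has $i<l$. Hence every descendant of $X_j$ has index larger than $j$, or is $Y$. No element of $X_{\prec j}$ can therefore be a descendant of $X_j$, and the first condition holds immediately.

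\textbf{Step 2 (blocking back-door paths).} Take any back-door path $X_j\leftarrow W\cdots Y$. Its first edge points into $X_j$, so $W\in\Pa(X_j)$, and topological ordering gives $W\in X_{\prec j}$. This is the point that needs care, and it is essentially the standard argument that $\Pa(X_j)$ blocks all back-door paths. The node $W$ is adjacent to $X_j$ on the path with the arrow $W\to X_j$, so $W$ is a non-collider on the path. Because $W$ is a conditioned non-collider, the path is blocked by $X_{\prec j}$ regardless of what happens at later nodes; colliders opened by conditioning elsewhere cannot unblock a path that already contains a blocked non-collider.

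I would remark that the hardest-looking issue, namely that conditioning on all predecessors might open collider paths, is defused by this "first node is a conditioned parent" observation. I would also note that it relies on causal sufficiency. With latent $U$ as in Figure~\ref{fig:dag}, a path $\boldsymbol X\leftarrow U\to Y$ begins with an unobserved node and need not be blocked. This explains why Proposition~\ref{prop:topological_identification} requires a componentwise check in general, while no check is needed in the causally sufficient case.
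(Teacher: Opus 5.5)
Your proposal is correct and follows essentially the same argument as the paper. The topological ordering makes every element of $X_{\prec j}$ a non-descendant of $X_j$ and puts every parent of $X_j$ in $X_{\prec j}$, so each back-door path is blocked at its first node, a conditioned non-collider; Proposition~\ref{prop:topological_identification} is then applied componentwise.
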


\begin{proof}[Proof]
Fix $j\in\{1,\ldots,k\}$. Since the ordering is topological, every variable
in $X_{\prec j}=X_{1:j-1}$ is a non-descendant of $X_j$, and every parent
of $X_j$ is contained in $X_{\prec j}$. Any path from $X_j$ to $Y$ that
begins with an arrow into $X_j$ therefore passes through a parent of $X_j$,
which is a noncollider on that path and is contained in $X_{\prec j}$.
Consequently, conditioning on $X_{\prec j}$ blocks every back-door path from
$X_j$ to $Y$. Thus $X_{\prec j}$ is a valid adjustment set for the total effect of
$X_j$ on $Y$, and the result follows from Proposition~\ref{prop:topological_identification}.
\end{proof}

\begin{remark}[Relation to parent-based adjustment in the causally sufficient special case]
\label{rem:topological_parents}
If the ordering is topological, every parent of $X_j$ precedes $X_j$, and
therefore $\Pa(X_j) \subseteq X_{\prec j}$.
Thus the antecedent set $X_{\prec j}$ always contains the parent set of $X_j$, which is
a sufficient adjustment set for the total effect of $X_j$ on $Y$.
In some cases, $X_{\prec j} = \Pa(X_j)$,
so that the corollary reduces to the familiar parent-adjusted identity
$\E(Y \mid X_j,\Pa(X_j)) = \E(Y \mid \doo(X_j), \Pa(X_j))$.
\end{remark}

Topological order is not always unique, for example in parallel structures, the simplest example being the DAG $X_1 \to Y \leftarrow X_2$. For such graphs, we propose the following.

\begin{proposition}[Unique topological ordering by clustering]
\label{prop:topological_clustering}

Let $X_1,\ldots,X_k$ be the variables included in the ordered variance
decomposition. Define a partial order $\preceq_G$ on these variables by
$X_i \preceq_G X_j$ if $X_i=X_j$ or $X_i$ is an ancestor of $X_j$ in the full causal graph $G$.
Assume also that $Y$ can be placed after all $X_i$ in a topological ordering of these variables.

If the variables can be partitioned into antichains
\[
\mathcal C_1,\ldots,\mathcal C_K,
\qquad
\bigcup_{\ell=1}^K \mathcal C_\ell
=
\{X_1,\ldots,X_k\},
\]
with $\mathcal C_\ell\cap\mathcal C_m=\varnothing$ for $\ell\neq m$,
such that $X_i \preceq_G X_j$ for every $X_i\in\mathcal C_\ell$ and $X_j\in\mathcal C_m$ whenever $\ell<m$,
then, treating each $\mathcal C_\ell$ as a vector-valued node, the clusters have the unique
topological ordering
$\mathcal C_1
\prec
\mathcal C_2
\prec
\cdots
\prec
\mathcal C_K
\prec Y$.
All topological orderings of the individual variables differ only by
permutations of variables within the clusters. Consequently, the ordered
variance decomposition is uniquely defined at the cluster level.
\end{proposition}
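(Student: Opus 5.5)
The plan is to reduce everything to order-theoretic facts about the ancestral partial order $\preceq_G$ restricted to $\{X_1,\ldots,X_k\}$. I first record that $\preceq_G$ is a genuine partial order: reflexive by definition, transitive because ancestry composes, and antisymmetric because $G$ is acyclic. A topological ordering of the individual variables is then the same thing as a linear extension of $\preceq_G$. If $X_i$ is a proper ancestor of $X_j$, every topological ordering places $X_i$ before $X_j$. Conversely, any linear order consistent with $\preceq_G$ is topological for the induced ancestral relations, and hence compatible with some topological ordering of $G$.

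Next I would show the cluster-level ordering is unique. Treat each $\mathcal C_\ell$ as a single vector-valued node, and say $\mathcal C_\ell$ precedes $\mathcal C_m$ if some member of $\mathcal C_\ell$ is a proper ancestor of some member of $\mathcal C_m$. The hypothesis gives $X_i\preceq_G X_j$ for every $X_i\in\mathcal C_\ell$ and $X_j\in\mathcal C_m$ with $\ell<m$. Since the clusters are disjoint, $X_i\neq X_j$, so $X_i$ is a proper ancestor of $X_j$. Hence every pair of distinct clusters is comparable, and the induced relation on clusters is a total order $\mathcal C_1\prec\cdots\prec\mathcal C_K$. A total order has exactly one linear extension, namely itself, so the cluster-level topological ordering is unique. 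Placing $Y$ last is permitted by the assumption that $Y$ can follow all $X_i$.

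Then I would characterize the individual-level orderings. Take any topological ordering $\pi$ of $X_1,\ldots,X_k$. For $\ell<m$, every element of $\mathcal C_\ell$ is a proper ancestor of every element of $\mathcal C_m$, so $\pi$ lists all of $\mathcal C_1$, then all of $\mathcal C_2$, and so on. Thus $\pi$ is a concatenation of orderings of the clusters in the fixed order, and two such $\pi$ can differ only by within-cluster permutations. Conversely, any within-cluster permutation preserves topologicality: each $\mathcal C_\ell$ is an antichain, so no ancestral constraint links members of the same cluster, and the cross-cluster constraints are all satisfied by the block structure.

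Finally, I would argue that the cluster-level ordered decomposition \eqref{eq:ordered_variance_decomposition} applied to $\mathcal C_1,\ldots,\mathcal C_K$ depends only on the sets $\mathcal C_{1:\ell}$. Each component $\Delta_\ell$ is built from conditional expectations and variances given the $\sigma$-algebras generated by $\mathcal C_{\prec\ell}$ and $\mathcal C_{1:\ell}$. These $\sigma$-algebras are invariant to how coordinates are arranged within the vectors, so the decomposition is uniquely defined at the cluster level.

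The main obstacle is mostly one of bookkeeping: turning ``unique topological ordering'' into the statement that the induced cluster relation is total, and handling the fact that the $\mathcal C_\ell$ are antichains, which the hypothesis makes automatic. The existence of such a partition is assumed rather than proved, so nothing deeper is needed.
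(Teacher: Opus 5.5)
Your proof is correct and follows essentially the same route as the paper's. Cross-cluster ancestry forces the block order $\mathcal C_1\prec\cdots\prec\mathcal C_K$ in every topological ordering, the antichain property leaves only within-cluster freedom, and the decomposition depends only on the cluster-level conditioning sets. Your extra details are minor refinements of that argument: proper ancestry following from disjointness, the converse that within-cluster permutations remain topological, and the $\sigma$-algebra invariance remark.
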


\begin{proof}[Proof]
Consider any two clusters $\mathcal C_\ell$ and $\mathcal C_m$ with
$\ell<m$. By assumption, every variable in $\mathcal C_\ell$ is an
ancestor of every variable in $\mathcal C_m$ in $G$. Hence every
topological ordering of the individual variables must place every member
of $\mathcal C_\ell$ before every member of $\mathcal C_m$. Therefore
the relative ordering of the clusters is necessarily
$\mathcal C_1\prec\mathcal C_2\prec\cdots\prec\mathcal C_K$.

Within any cluster $\mathcal C_\ell$, the variables form an antichain
under $\preceq_G$, so no two distinct members of the cluster are related
by ancestry. Consequently, the ancestral partial order imposes no
ordering constraint among variables within the same cluster, and
different topological orderings can differ only by permutations within
the clusters.

Thus, after treating each $\mathcal C_\ell$ as a vector-valued node, all
within-cluster permutations represent the same cluster-level ordering,
which is uniquely $\mathcal C_1\prec\mathcal C_2\prec\cdots\prec\mathcal C_K\prec Y$.
Since an ordered variance decomposition depends only on the ordering of
the nodes to which components are assigned, the resulting decomposition
is uniquely defined at the cluster level.
\end{proof}

\begin{remark}[Clustering versus averaging]
When several topological orderings remain because variables are causally
parallel, one possible approach is to average the resulting decompositions
over admissible orderings, as in topological-order extensions of
Shapley-based attribution. We do not pursue this approach here. When the
parallel variables have a common substantive role, we instead prefer to
combine them into a vector-valued node. This retains variation arising from
interactions among the variables within the same component rather than
allocating it according to an arbitrary within-cluster ordering.

\end{remark}

\subsection{Example: application of the topological-order identification rule}\label{sec:topologicalexample}

In Figure~\ref{fig:dag}, the observed variables admit the ordering 
$Z\prec\boldsymbol X\prec A\prec Y$.
When the unobserved variable $U$ is present, however, the causal
interpretation of the resulting components must be assessed separately.
The $Z$ component is identified without adjustment, whereas the
$\boldsymbol X$ component is not identified because the back-door path
$\boldsymbol X\leftarrow U\to Y$
remains open after conditioning on $Z$. In contrast, the $A$ component
is identified after conditioning on $(Z,\boldsymbol X)$, which blocks
the back-door paths from $A$ to $Y$. Thus, the presence of unobserved
confounding need not invalidate the causal interpretation of every
component of the decomposition.
This can be verified graphically by the \texttt{dagitty} package functions as
\begin{verbatim}
library(dagitty)

g <- dagitty("dag {
  Z -> X -> A -> Y
  Z -> A
  Z -> Y
  X -> Y
  U -> X
  U -> Y
}")
plot(g)

# Z-term: empty adjustment set for effect of Z on Y
isAdjustmentSet(g, Z = character(0), exposure = "Z", outcome = "Y")

# X-term: adjust for Z for effect of X on Y
isAdjustmentSet(g, Z = "Z", exposure = "X", outcome = "Y")

# A-term: adjust for Z and X for effect of A on Y
isAdjustmentSet(g, Z = c("Z","X"), exposure = "A", outcome = "Y")
\end{verbatim}

If $U$ is absent, the graph is causally sufficient and
$Z\prec\boldsymbol X\prec A\prec Y$ is a topological ordering of all
explanatory variables. In that special case, the preceding variables
form valid adjustment sets automatically, as stated in
Corollary~\ref{cor:topological_causal_sufficiency}.
Hence \eqref{eq:example_ordered_decomposition_simplified} may be written as
\begin{align}
V(Y)
&=
V_Z\bigl(\E(Y \mid \doo(Z))\bigr)
+
\E_Z
\Bigl(
V_{\boldsymbol X \mid Z}
\bigl(\E(Y \mid \doo(\boldsymbol X), Z)\bigr)
\Bigr)
\nonumber\\
&\qquad
+
\E_{\boldsymbol X, Z}
\Bigl(
V_{A \mid Z,\boldsymbol X}
\bigl(\E(Y \mid \doo(A), Z,\boldsymbol X)\bigr)
\Bigr)
\nonumber\\
&\qquad
+
\E_{\boldsymbol X, Z, A}
\bigl[V(Y \mid \boldsymbol X, Z, A)\bigr].
\label{eq:topological_example_causal}
\end{align}

\subsection{Causal identification under modified orderings}

We now consider a user-provided ordering of observed variables
\[
\sigma:\quad
X_{\sigma(1)}
\prec X_{\sigma(2)}
\prec\cdots\prec
X_{\sigma(h)}
\prec Y,
\]
which need not be topological. The variables included in the ordering may be
only a subset of the observed variables in the causal system. In addition,
the full causal DAG $G$ may contain unobserved variables. Variables not
included in the ordering do not receive separate variance components; rather,
the residual component is defined after conditioning only on the variables
included in $\sigma$.

The ordered variance decomposition remains a probabilistic identity for any
such ordering, but its causal interpretation is not automatic. In particular,
some variables preceding $X_j$ under $\sigma$ may be descendants of $X_j$ in
the full causal graph. Such variables cannot be interpreted as ordinary
adjustment variables and are instead treated as variables held fixed when
interpreting the $X_j$ component.

For a variable $X_j$ included in $\sigma$, let
$X_{\prec j}^{\sigma}$ denote the variables that precede $X_j$ under the
user-provided ordering. Define
\[
C_j^\sigma
:=
X_{\prec j}^{\sigma}\cap\De_G(X_j),
\]
where $\De_G(X_j)$ denotes the descendants of $X_j$ in the full causal graph
$G$, including descendants connected to $X_j$ through paths containing
unobserved variables. Let
\[
B_j^\sigma
:=
X_{\prec j}^{\sigma}\setminus C_j^\sigma.
\]
Thus, $C_j^\sigma$ contains the preceding variables that are downstream of
$X_j$ and are interpreted as controlled, whereas $B_j^\sigma$ contains the
remaining preceding variables and serves as the candidate adjustment set.
Define the corresponding joint exposure set as
\[
E_j^\sigma
:=
\{X_j\}\cup C_j^\sigma.
\]
The following proposition addresses causal identification of the modified-order variance components.

\begin{proposition}[Componentwise identification under a modified ordering]
\label{prop:modified_order_identification}

For any variable $X_j$ in the ordering $\sigma$, if $B_j^\sigma$ is a
valid adjustment set for the total effect of the joint exposure set
$E_j^\sigma$ on $Y$ in the full causal graph $G$, then
\[
\E(Y\mid X_j,X_{\prec j}^{\sigma})
=
\E(Y\mid \doo(E_j^\sigma),B_j^\sigma),
\]
and hence
\[
\Delta_j^\sigma
=
\E_{X_{\prec j}^{\sigma}}
\left[
V_{X_j\mid X_{\prec j}^{\sigma}}
\left\{
\E(Y\mid X_j,X_{\prec j}^{\sigma})
\right\}
\right]
=
\E_{X_{\prec j}^{\sigma}}
\left[
V_{X_j\mid X_{\prec j}^{\sigma}}
\left\{
\E(Y\mid \doo(E_j^\sigma),B_j^\sigma)
\right\}
\right].
\]
Thus, whenever the adjustment condition holds for component $j$, its
controlled-effect causal counterpart is identified by the corresponding
observed-data variance component.
\end{proposition}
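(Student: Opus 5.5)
The proof will follow the template of Proposition~\ref{prop:topological_identification}, with the single exposure $X_j$ replaced by the joint exposure set $E_j^\sigma=\{X_j\}\cup C_j^\sigma$. The first step is to rewrite the conditioning set. By construction, $X_{\prec j}^{\sigma}=B_j^\sigma\cup C_j^\sigma$ is a disjoint union. Hence conditioning on $(X_j,X_{\prec j}^{\sigma})$ is the same as conditioning on $(E_j^\sigma,B_j^\sigma)$, which gives
\[
\E(Y\mid X_j,X_{\prec j}^{\sigma})=\E(Y\mid E_j^\sigma,B_j^\sigma).
\]
The claim thus reduces to showing that $\E(Y\mid E_j^\sigma,B_j^\sigma)=\E(Y\mid \doo(E_j^\sigma),B_j^\sigma)$, i.e.\ that for the multivariate exposure $E_j^\sigma$, conditioning on it is equivalent to intervening on it within strata of $B_j^\sigma$.

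The second step invokes the assumed adjustment validity. By definition, $B_j^\sigma$ is a valid adjustment set for the total effect of $E_j^\sigma$ on $Y$ in the full graph $G$ according to the graphical adjustment criterion. The adjustment criterion gives
\[
p(y\mid \doo(e))=\sum_b p(y\mid e,b)\,p(b),
\]
but we need the stronger covariate-specific statement $p(y\mid \doo(e),b)=p(y\mid e,b)$. I would obtain this as follows.

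\begin{itemize}
\item By construction, $B_j^\sigma$ contains no descendant of $X_j$. One also needs that $B_j^\sigma$ contains no descendant of any member of $C_j^\sigma$; this holds because descendants of $C_j^\sigma$ are descendants of $X_j$.
\item Since $B_j^\sigma$ is then non-descendant of $E_j^\sigma$, intervening on $E_j^\sigma$ leaves the marginal law of $B_j^\sigma$ unchanged: $p(b\mid\doo(e))=p(b)$, by Rule~3 of do-calculus, or directly from the truncated factorization.
\item Consequently $p(y\mid\doo(e),b)=p(y,b\mid\doo(e))/p(b)$.
\end{itemize}

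Turning the averaged identity into the stratum-specific one is the main obstacle. The adjustment criterion only delivers the average over $b$, so the plan is to argue via the generalized back-door form. When $B_j^\sigma$ consists of non-descendants of $E_j^\sigma$, validity of the adjustment set is equivalent to $Y\perp E_j^\sigma\mid B_j^\sigma$ in $G_{\underline{E_j^\sigma}}$, the graph with arrows out of $E_j^\sigma$ deleted. This is the standard characterization (Shpitser--VanderWeele--Robins; Perković et al.): the forbidden set is empty here, and the proper-causal-path condition reduces to back-door blocking. Rule~2 of do-calculus then yields
\[
p(y\mid\doo(e),b)=p(y\mid e,b).
\]
Taking expectations gives the conditional-mean identity. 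If the nondescendant argument turns out to need more care, the fallback is to interpret ``valid adjustment set'' directly as this conditional-ignorability condition, matching how the paper used it in Proposition~\ref{prop:topological_identification}.

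Finally, I substitute the identity into the inner conditional mean of $\Delta_j^\sigma$. The outer operations $\E_{X_{\prec j}^{\sigma}}$ and $V_{X_j\mid X_{\prec j}^{\sigma}}$ remain taken over the observational law, so the displayed equality follows immediately, exactly as in Proposition~\ref{prop:topological_identification}.
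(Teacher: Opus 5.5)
Your proposal is correct and follows the paper's proof: write the conditioning set $(X_j, X_{\prec j}^\sigma)$ as $(E_j^\sigma, B_j^\sigma)$, apply Rule~2 of do-calculus, and substitute the result into the inner conditional mean of $\Delta_j^\sigma$. Your extra step makes explicit a justification the paper leaves implicit: since $B_j^\sigma$ contains no descendants of $E_j^\sigma$, adjustment validity is equivalent to $Y\perp E_j^\sigma\mid B_j^\sigma$ in $G_{\underline{E_j^\sigma}}$, which is exactly what Rule~2 requires. Two small slips: strictly, $B_j^\sigma$ avoids the forbidden set rather than that set being empty, and the Rule~3 bullets are not needed for the argument.
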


\begin{proof}[Proof]
Since
\[
X_{\prec j}^{\sigma}
=
B_j^\sigma\cup C_j^\sigma,
\]
we have
\[
\E(Y\mid X_j,X_{\prec j}^{\sigma})
=
\E(Y\mid E_j^\sigma,B_j^\sigma).
\]
If $B_j^\sigma$ is a valid adjustment set for the joint exposure
$E_j^\sigma$, rule 2 of do-calculus gives
\[
\E(Y\mid E_j^\sigma,B_j^\sigma)
=
\E(Y\mid \doo(E_j^\sigma),B_j^\sigma).
\]
The result follows by substitution into the $j$th variance component.
\end{proof}

The condition in Proposition~\ref{prop:modified_order_identification} can 
again be checked graphically for each component. For example, the
\texttt{isAdjustmentSet()} function in \texttt{dagitty} can be applied with
$E_j^\sigma$ as the possibly multivariate exposure, $B_j^\sigma$ as the
candidate adjustment set, and $Y$ as the outcome. Importantly, this check is
performed in the full causal graph, including any unobserved variables.

\begin{remark}[Relation to the topological-order case]
\label{rem:modified_reduces_topological}
If the user-provided ordering is topological, no descendant of $X_j$ can
precede $X_j$, so that $C_j^\sigma=\varnothing$, $B_j^\sigma=X_{\prec j}^{\sigma}$ and $E_j^\sigma=\{X_j\}$.
Proposition~\ref{prop:modified_order_identification} then reduces to the
Proposition~\ref{prop:topological_identification} setting. Under 
the additional conditions of Corollary~\ref{cor:topological_causal_sufficiency}, 
the required adjustment conditions hold automatically.
\end{remark}

\begin{remark}[Graphical failures under a modified ordering]
Because $B_j^\sigma$ is determined by the user-provided ordering rather than
chosen specifically as an adjustment set, it need not satisfy the adjustment
criterion. For example, preceding variables may include colliders or may fail
to block back-door paths involving observed or unobserved common causes.
Unobserved variables require no separate argument: they are retained in the
full causal graph and may cause the adjustment condition to fail.
\end{remark}

\subsection{Example: application of the modified-order identification rule}

Consider again the DAG in Figure~\ref{fig:dag}, but now use the modified
ordering $\boldsymbol X\prec Z\prec A\prec Y$.
This ordering is not topological because $\boldsymbol X$ is a descendant of
$Z$ but precedes $Z$ in the ordering. The motivation for such an
order is to consider between-group disparities in the outcome within
strata of allowable clinical covariates \citep{yu2025causal}. In the context
of the running example, we consider $\boldsymbol X$ to be both treatment (hospital)
and outcome (a process of care) allowable \citep{jackson2021meaningful}.

When the unobserved variable $U$ is present, the causal interpretation of the
components can again be assessed separately. For the $\boldsymbol X$
component there are no preceding variables, so
$C_{\boldsymbol X}^{\sigma} = B_{\boldsymbol X}^{\sigma} = \varnothing$.
Identification would therefore require the empty set to be a valid adjustment
set for the effect of $\boldsymbol X$ on $Y$. This condition fails because
the back-door paths $\boldsymbol X\leftarrow Z\to Y$ and
$\boldsymbol X\leftarrow U\to Y$ remain open.

For the $Z$ component, the preceding variable $\boldsymbol X$ is a descendant
of $Z$, and hence
$C_Z^\sigma=\{\boldsymbol X\}$ and $B_Z^\sigma=\varnothing$.
Thus, $\boldsymbol X$ is treated as controlled and the relevant joint exposure
set is $\{Z,\boldsymbol X\}$. When $U$ is present, however, the empty set is
not a valid adjustment set for the joint effect of $(Z,\boldsymbol X)$ on
$Y$, because the path $\boldsymbol X\leftarrow U\to Y$ remains open. 
The $Z$ component is therefore not causally identified.

For the $A$ component, neither of its preceding variables is a descendant of
$A$, so $C_A^\sigma=\varnothing$ and $B_A^\sigma=\{Z,\boldsymbol X\}$.
Conditioning on $(Z,\boldsymbol X)$ blocks the back-door paths from $A$ to
$Y$, including those involving $U$. Consequently, the $A$ component remains
causally identified.

These conclusions can be verified graphically using the \texttt{dagitty}
package (same \texttt{g} graph as in Section \ref{sec:topologicalexample} example):
\begin{verbatim}
# X-term: empty adjustment set for effect of X on Y
isAdjustmentSet(g, Z = character(0), exposure = "X", outcome = "Y")

# Z-term: X is controlled, giving joint exposure (Z,X)
isAdjustmentSet(g, Z = character(0),
                exposure = c("Z","X"), outcome = "Y")

# A-term: adjust for Z and X for effect of A on Y
isAdjustmentSet(g, Z = c("Z","X"), exposure = "A", outcome = "Y")
\end{verbatim}

If $U$ is absent, the $\boldsymbol X$ component remains unidentified because
the back-door path $\boldsymbol X\leftarrow Z\to Y$
is still open. However, the situation changes for the $Z$ component. Once $U$ is
removed, the empty set is a valid adjustment set for the joint exposure
$(Z,\boldsymbol X)$, and hence
\[
\E(Y\mid Z,\boldsymbol X)
=
\E\{Y\mid\doo(Z,\boldsymbol X)\}.
\]
The $Z$ component therefore has a controlled-effect interpretation in which
$\boldsymbol X$, a descendant of $Z$, is held fixed. The $A$ component
continues to be identified after adjustment for $(Z,\boldsymbol X)$. Thus, in the absence of $U$,
\eqref{eq:example_modified_ordered_decomposition_simplified} may be written as
\begin{align}
V(Y)
&=
V_{\boldsymbol X}
\left\{
\E(Y\mid\boldsymbol X)
\right\}
\nonumber\\
&\qquad+
\E_{\boldsymbol X}
\left[
V_{Z\mid\boldsymbol X}
\left\{
\E(Y\mid\doo(Z,\boldsymbol X))
\right\}
\right]
\nonumber\\
&\qquad+
\E_{\boldsymbol X,Z}
\left[
V_{A\mid\boldsymbol X,Z}
\left\{
\E(Y\mid\doo(A),\boldsymbol X,Z)
\right\}
\right]
\nonumber\\
&\qquad+
\E_{\boldsymbol X,Z,A}
\left[
V(Y\mid\boldsymbol X,Z,A)
\right].
\label{eq:modified_example_causal}
\end{align}
The first component remains associational, whereas the $Z$ and $A$
components have causal interpretations. In particular, the $Z$ component
describes variation in the outcome under interventions on $Z$ while
$\boldsymbol X$ is held fixed, rather than variation due to the total effect
of $Z$. Also, by the law of total variance, the latter 3 terms constitute a decomposition 
of $E_{\boldsymbol X} \left\{ \V(Y\mid\boldsymbol X) \right\}$.

\subsection{Example: using clustering to obtain an identified coarser component}

Clustering can also be used to change the granularity of the decomposition
when a scientifically meaningful component is not identified under a finer
partition of the variables. To illustrate this, consider the extension of
the running example in Figure~\ref{fig:allowable1}. Suppose the case-mix
variables are partitioned into two vector-valued nodes:
$\boldsymbol X$, representing allowable clinical characteristics, and
$\boldsymbol W$, representing non-allowable demographic or socioeconomic
characteristics.

\begin{figure}[h]
\begin{subfigure}{0.5\textwidth}
\begin{center}
\begin{tikzpicture}
    \node[state] (z) at (0,0) {$Z$};
    \node[state] (a) at (2,0) {$A$};
    \node[state] (y) at (4,0) {$Y$};
    \node[state] (w) at (1,2) {$\boldsymbol W$};
    \node[state] (x) at (3,2) {$\boldsymbol X$};

	\path (z) edge (w);
    \path (z) edge (x);
    \path (z) edge (a);
    \path (z) edge[bend right=45] (y);

	\path (w) edge (x);
	\path (w) edge (a);
	\path (w) edge (y);

    \path (x) edge (a);
    \path (x) edge (y);
    \path (a) edge (y);

\end{tikzpicture}
\end{center}
\caption{DAG with clusters $\boldsymbol W$ and $\boldsymbol X$}
\label{fig:allowable1}
\end{subfigure}
\begin{subfigure}{0.5\textwidth}
\begin{center}
\begin{tikzpicture}
    \node[state] (z) at (0,0) {$\boldsymbol Z$};
    \node[state] (a) at (2,0) {$A$};
    \node[state] (y) at (4,0) {$Y$};
    \node[state] (x) at (2,2) {$\boldsymbol X$};

    \path (z) edge (x);
    \path (z) edge (a);
    \path (z) edge[bend right=45] (y);

    \path (x) edge (a);
    \path (x) edge (y);
    \path (a) edge (y);
\end{tikzpicture}
\end{center}
\caption{DAG where $\boldsymbol Z = (Z, \boldsymbol W)$}
\label{fig:allowable2}
\end{subfigure}

\caption{A variant of the hypothetical hospital comparison setting where the variables $\boldsymbol W$ are considered to be non-allowable for adjustment.}
\label{fig:allowable}
\end{figure}

Consider the modified ordering
$\boldsymbol X \prec Z \prec \boldsymbol W \prec A \prec Y$.
For the $Z$ component, $\boldsymbol X$ precedes $Z$ despite being a
descendant of $Z$. Thus, $C_Z^\sigma=\{\boldsymbol X\}$ and $B_Z^\sigma=\varnothing$,
and identification would require
\[
\E(Y\mid \boldsymbol X,Z)
=
\E\{Y\mid\doo(\boldsymbol X,Z)\}.
\]
This equality does not hold in general because the path
$\boldsymbol X\leftarrow\boldsymbol W\to Y$ remains open. 
Hence the $Z$ component under this modified ordering is not causally identified.

Suppose instead that $Z$ and $\boldsymbol W$ are combined into the
vector-valued node $\boldsymbol Z=(Z,\boldsymbol W)$,
as shown in Figure~\ref{fig:allowable2}, and consider the modified ordering
$\boldsymbol X\prec\boldsymbol Z\prec A\prec Y$.
For the $\boldsymbol Z$ component, $\boldsymbol X$ is a preceding descendant, so that
$C_{\boldsymbol Z}^\sigma=\{\boldsymbol X\}$ and $B_{\boldsymbol Z}^\sigma=\varnothing$.
The corresponding joint exposure is $(\boldsymbol Z,\boldsymbol X)=(Z,\boldsymbol W,\boldsymbol X)$.
In the graph of Figure~\ref{fig:allowable2}, the empty set is a valid
adjustment set for this joint exposure, and therefore
\[
\E(Y\mid\boldsymbol X,\boldsymbol Z)
=
\E\{Y\mid\doo(\boldsymbol X,\boldsymbol Z)\}.
\]
The resulting cluster-level component is therefore causally identified.

Importantly, clustering does not identify the original $Z$ component.
Rather, it changes the estimand by attributing variation jointly to
$(Z,\boldsymbol W)$. This may be scientifically appropriate when the
variables in the cluster are intended to represent a common class of
pathways, such as sources of variation via different demographic or socioeconomic factors.

\subsection{Connection to intrinsic causal contributions}

The ordered causal variance decomposition considered here is closely related
to the intrinsic causal contribution (ICC) framework of
\citet{janzing2024quantifying}, although the two frameworks address different
attribution questions. In the basic unconfounded formulation of
\citet{janzing2024quantifying}, the causal system is represented by a
nonparametric structural equation model with independent errors (NPSEM-IE),
\[
X_j = f_j\{\Pa(X_j),N_j\},
\]
where the exogenous noise variables $N_1,\ldots,N_n$ are jointly
independent. They also extend the ICC framework to confounded systems by
allowing dependence among the exogenous noise variables.
Recursively substituting the structural equations
expresses a target variable $Y$ as a function of these exogenous noise
variables. For an index set $T$, let $N_T$ denote the corresponding
subvector of noise variables. Adapting their notation to the present setting,
the intrinsic causal contribution of $X_j$ conditional on $N_T$ is defined
as the reduction in a conditional uncertainty measure $\psi$ obtained by
additionally conditioning on $N_j$:
\[
\mathrm{ICC}_{\psi}(X_j\to Y\mid T)
=
\psi(Y\mid N_T)
-
\psi(Y\mid N_T,N_j).
\]
For the variance-based uncertainty measure
$\psi(Y\mid N_T) = \E\{V(Y\mid N_T)\}$,
this becomes
\[
\mathrm{ICC}_{V}(X_j\to Y\mid T)
=
\E\{V(Y\mid N_T)\}
-
\E\{V(Y\mid N_T,N_j)\}.
\]
Thus, ICC attributes to a node the reduction in expected residual variance
associated with revealing the intrinsic noise introduced at that node, after
the noises indexed by $T$ have already been revealed.

Our ordered variance components have the same difference form, but are
defined directly in terms of the observed explanatory variables. By the law
of total variance,
\begin{align}
\Delta_j
&=
\E_{X_{\prec j}}
\left[
V(Y\mid X_{\prec j})
\right]
-
\E_{X_{1:j}}
\left[
V(Y\mid X_{1:j})
\right],
\label{eq:component_residual_reduction}
\end{align}
where for $j=1$ the first term is understood as $V(Y)$. Thus, $X_j$ is
attributed the reduction in expected residual variance obtained by adding it
to the variables preceding it in the chosen ordering.

The connection with ICC is especially direct for a causally sufficient model
and a fixed topological ordering. If $T$ consists of the variables preceding
$X_j$, \citet{janzing2024quantifying} show that
$\psi(Y\mid N_T)=\psi(Y\mid X_T)$.
Consequently, for the variance-based uncertainty measure, the corresponding
sequential ICC differences coincide with
\eqref{eq:component_residual_reduction}. In this special case, our
topological decomposition can therefore be viewed as an observed-data
representation of a sequential variance-based ICC decomposition.

This equivalence does not extend automatically to the more general setting
considered here. The decomposition may include only a selected collection of
observed variables from a causal system that also contains other observed or
unobserved variables. We define causal counterparts of the resulting
observed-variable components and assess their identification separately
using the full causal graph, as in
Proposition~\ref{prop:topological_identification}. An identified causal
component in this sense need not coincide with an intrinsic contribution
defined through a particular exogenous noise variable in a full structural
causal model. The two constructions therefore answer related but distinct
attribution questions: ICC attributes uncertainty to variation introduced by
node-specific mechanisms, whereas our decomposition attributes variation
sequentially to a prespecified collection of observed explanatory variables.

The treatment of ordering also differs. For any fixed ordering,
\citet{janzing2024quantifying} obtain a sequential decomposition by taking
$T$ to consist of the indices preceding each node. Their principal
construction removes dependence on this ordering by averaging the
contributions over all possible orderings using Shapley symmetrization; they
also consider an alternative in which the averaging is restricted to
topological orderings. \citet{saha2025measuring} likewise define a
topological ICC by averaging over all admissible topological orderings. In
contrast, we regard the ordering as part of the definition of the scientific
estimand. When variables are causally unordered relative to one another but
have a common scientific role, we instead propose combining them into a
vector-valued node when such grouping is scientifically appropriate. Under
the conditions of Proposition~\ref{prop:topological_clustering}, this yields
a unique topological ordering at the cluster level. The case-mix covariates
$\boldsymbol X$ in the running example are already treated in this way. This
is also consistent with the broader observation that causal attribution can
depend on the chosen granularity of the causal graph
\citep{tikka2023clustering,janzing2024quantifying}.

A further distinction concerns residual variation. In the ICC construction,
the target is a deterministic function of all exogenous noise variables, so
conditioning on all of them leaves no remaining target uncertainty. The
successive ICC differences therefore exhaust the target uncertainty when
contributions are assigned to all nodes, including the target node itself;
Shapley symmetrization preserves this property. Our decomposition instead
stops after the prespecified explanatory variables and retains the explicit
residual component
$\Delta_{\mathrm{res}} =
\E_{X_{1:k}}
\left[
V(Y\mid X_{1:k})
\right]$.
In the special causally sufficient case where $X_1,\ldots,X_k,Y$ comprise
the complete system and are taken in topological order, this residual can be
viewed as the final intrinsic contribution associated with the target
variable $Y$. More generally, however, the residual may also contain
variation attributable to variables omitted from the decomposition,
including unobserved variables, as well as intrinsic stochastic variation in
$Y$. It should therefore not in general be identified solely with the
target node's own exogenous noise.

Finally, our framework permits scientifically motivated orderings that are
not topological. As shown in
Proposition~\ref{prop:modified_order_identification}, the causal counterparts
of such components may still be identified when preceding descendants are
treated as controlled variables and the remaining preceding variables form a
valid adjustment set. These controlled-effect components do not have a direct
analogue in the topological ICC constructions considered above.

\section{Model-based plug-in point estimation}

Once a causal interpretation of a variance component has been established
using the graphical conditions of Section~3, estimation proceeds through its
identified observed-data representation. For a chosen ordering
$\sigma:\quad
X_{\sigma(1)}
\prec X_{\sigma(2)}
\prec\cdots\prec
X_{\sigma(k)}
\prec Y$,
write
$X_{\sigma(1:j)}
=
\bigl(
X_{\sigma(1)},\ldots,X_{\sigma(j)}
\bigr)$,
and define the conditional mean surface
$m(x_{1:k})
:=
\E(Y\mid X_{1:k}=x_{1:k})$.

The joint distribution can be factorized according to the chosen ordering as
\begin{equation}
p_\sigma(x_{1:k},y)
=
p_\sigma(x_{\sigma(1)})
\prod_{j=2}^{k}
p_\sigma
\left(
x_{\sigma(j)}
\mid
x_{\sigma(1:j-1)}
\right)
p(y\mid x_{1:k}).
\label{eq:order_specific_factorization}
\end{equation}
This is a statistical factorization of the observed-data distribution and
does not require $\sigma$ to be a topological ordering of the causal DAG.

To make explicit how this factorization identifies the ordered variance
decomposition, the decomposition \eqref{eq:ordered_variance_decomposition} can be written in fully nested form as
\begin{align}
V(Y)
&=
\sum_{j=1}^{k}
\E_{X_{\sigma(1)}}
\E_{X_{\sigma(2)}\mid X_{\sigma(1)}}
\cdots
\E_{X_{\sigma(j-1)}\mid X_{\sigma(1:j-2)}}
\Bigg[
\nonumber\\[-1mm]
&\qquad\quad
V_{X_{\sigma(j)}\mid X_{\sigma(1:j-1)}}
\Bigg\{ \E_{X_{\sigma(j+1)}\mid X_{\sigma(1:j)}}
\cdots
\E_{X_{\sigma(k)}\mid X_{\sigma(1:k-1)}}
\bigl[
m(X_{1:k})
\bigr]
\Bigg\}
\Bigg]
\nonumber\\
&\quad+
\E_{X_{\sigma(1)}}
\E_{X_{\sigma(2)}\mid X_{\sigma(1)}}
\cdots
\E_{X_{\sigma(k)}\mid X_{\sigma(1:k-1)}}
\left[
V(Y\mid X_{1:k})
\right].
\label{eq:ordered_variance_decomposition_nested}
\end{align}
For $j=1$, the expectations preceding the variance operator are omitted,
while for $j=k$, the expectations inside the variance operator are omitted.

Each expectation or variance in
\eqref{eq:ordered_variance_decomposition_nested} is therefore taken with
respect to one of the conditional distributions appearing in
\eqref{eq:order_specific_factorization}. By repeated application of the
tower property, the inner nested expectation for the $j$th component satisfies
\[
\E_{X_{\sigma(j+1)}\mid X_{\sigma(1:j)}}
\cdots
\E_{X_{\sigma(k)}\mid X_{\sigma(1:k-1)}}
\left[
m(X_{1:k})
\right]
=
\E
\left(
Y\mid X_{\sigma(1:j)}
\right),
\]
so that \eqref{eq:ordered_variance_decomposition_nested} is equivalent to
the more compact ordered variance decomposition introduced in Section~2.
The nested representation is useful for estimation because it makes explicit
which conditional distributions must be estimated for a given ordering.

A natural class of point estimators is consequently obtained by fitting working
models for the factors in
\eqref{eq:order_specific_factorization} and evaluating the nested
expectations and variances in
\eqref{eq:ordered_variance_decomposition_nested} by direct plug-in.
The leading factor
$p_\sigma(x_{\sigma(1)})$ may be modeled parametrically or replaced by the
empirical distribution when the first variable in the ordering is continuous
or high-dimensional. The remaining factors can be fitted using regression
models appropriate to the support of the corresponding variables.
Factorization \eqref{eq:order_specific_factorization}
aligns the fitted models directly with the
estimand. Different orderings correspond to different decomposition
functionals and, in general, require different conditional distributions.

We focus first on point estimation in the modified-order example
$\boldsymbol X \prec Z \prec A \prec Y$,
where $\boldsymbol X$ denotes a vector of case-mix covariates, $Z$ is a categorical sociodemographic group indicator, 
$A$ is categorical hospital assignment, and $Y$ is a dichotomous process-of-care outcome.
In this setting a direct order-based estimator may be based on the factorization
\[
p(\boldsymbol x,z,a,y)
=
p(\boldsymbol x)\,p(z \mid \boldsymbol x)\,p(a \mid \boldsymbol x,z)\,p(y \mid \boldsymbol x,z,a).
\]
We propose to estimate $p(\boldsymbol x)$ by the empirical distribution
$\hat F_{\boldsymbol X}$ of the observed covariates, and to fit
\[
\hat p(z \mid \boldsymbol x),\qquad
\hat p(a \mid \boldsymbol x,z),\qquad
\hat p(y=1 \mid \boldsymbol x,z,a)
\]
using multinomial logistic, multinomial logistic, and logistic regression models,
respectively.
This avoids imposing a parametric model on the joint distribution of the covariates,
while still yielding a fully model-based estimator for the reordered conditional
structure required by the decomposition.

Let $\hat m(\boldsymbol x,z,a) := \hat \E(Y \mid \boldsymbol x,z,a)$
denote the fitted conditional mean from the logistic outcome model, and define
\[
\hat g(\boldsymbol x,z)
:=
\sum_{a} \hat p(a \mid \boldsymbol x,z)\,\hat m(\boldsymbol x,z,a),
\qquad
\hat h(\boldsymbol x)
:=
\sum_{z}\hat p(z \mid \boldsymbol x)\,\hat g(\boldsymbol x,z).
\]
Then the modified-order variance components are estimated by replacing the
conditional distributions, expectations, and variances in the nested
representation \eqref{eq:ordered_variance_decomposition_nested} by their
empirical or fitted counterparts. For the ordering
$\boldsymbol X\prec Z\prec A\prec Y$, this gives
\[
\hat\Delta_{\boldsymbol X}
=
\frac{1}{n}\sum_{i=1}^n
\left\{
\hat h(\boldsymbol X_i)
-
\frac{1}{n}\sum_{i'=1}^n \hat h(\boldsymbol X_{i'})
\right\}^2,
\]
\[
\hat\Delta_Z
=
\frac{1}{n}\sum_{i=1}^n
\sum_{z}
\hat p(z \mid \boldsymbol X_i)
\left\{
\hat g(\boldsymbol X_i,z)-\hat h(\boldsymbol X_i)
\right\}^2,
\]
and
\[
\hat\Delta_A
=
\frac{1}{n}\sum_{i=1}^n
\sum_{z}\hat p(z \mid \boldsymbol X_i)
\sum_{a}\hat p(a \mid \boldsymbol X_i,z)
\left\{
\hat m(\boldsymbol X_i,z,a)-\hat g(\boldsymbol X_i,z)
\right\}^2.
\]
For a binary outcome, the residual component is estimated by
\[
\hat\Delta_{\mathrm{res}}
=
\frac{1}{n}\sum_{i=1}^n
\sum_{z}\hat p(z \mid \boldsymbol X_i)
\sum_{a}\hat p(a \mid \boldsymbol X_i,z)
\hat m(\boldsymbol X_i,z,a)\{1-\hat m(\boldsymbol X_i,z,a)\}.
\]

If one is willing to impose a parametric model for $p(\boldsymbol X \mid Z)$, 
an alternative implementation for the modified-order decomposition is to fit the
topological factorization
\[
p(z)\,p(\boldsymbol x \mid z)\,p(a \mid \boldsymbol x,z)\,p(y \mid \boldsymbol x,z,a)
\]
and recover $p(z \mid \boldsymbol x)$ by Bayes' rule. This will be of interest in the simulation study
below because the oracle data-generating mechanism uses such a model.
In realistic applications, however, a parametric model for
$p(\boldsymbol X \mid Z)$ may be less compelling than a direct model for
$Z \mid \boldsymbol X$, especially when $\boldsymbol X$ is high-dimensional or
contains a mixture of continuous and categorical variables.

\section{Approximate Bayesian inference}\label{sec:inference}

We use an approximate Bayesian procedure to quantify uncertainty in the
model-based plug-in point estimators. The construction mirrors the order-specific
factorization in \eqref{eq:order_specific_factorization}. When the
conditional factors are represented by separate working models with
variation-independent parameter blocks, their likelihood contributions
factorize. With independent flat or weakly informative priors on these
parameter blocks, this motivates drawing the parameters of the fitted
conditional models independently from large-sample Gaussian approximations
to their posterior distributions.

Let $\theta_r$ denote the parameter vector for conditional model $r$, with
maximum likelihood estimator $\hat\theta_r$ and estimated covariance matrix
$\widehat{\mathrm{Var}}(\hat\theta_r)$. For posterior draw
$b=1,\ldots,B_{\mathrm{post}}$, we independently sample
\[
\theta_r^{(b)}
\sim
N\left\{
\hat\theta_r,
\widehat{\mathrm{Var}}(\hat\theta_r)
\right\}
\]
for each fitted conditional model. For example, for the modified-order
direct estimator under $\boldsymbol X\prec Z\prec A\prec Y$,
separate draws are taken for the parameters of the models for
$Z\mid\boldsymbol X$, $A\mid\boldsymbol X,Z$, and
$Y\mid\boldsymbol X,Z,A$.

When the decomposition involves an empirical distribution, such
as the distribution of $\boldsymbol X$ in the running example, uncertainty in that
is propagated using Bayesian bootstrap, which places a nonparametric posterior distribution on the probabilities
assigned to the observed sample points. Specifically, for
posterior draw $b=1,\ldots,B_{\mathrm{post}}$, we draw
\[
(W_1^{(b)},\ldots,W_n^{(b)})
\sim
\mathrm{Dirichlet}(1,\ldots,1).
\]
Equivalently, draw independent exponential random variables
$E_i^{(b)}\sim\mathrm{Exponential}(1)$ and set
\[
W_i^{(b)}
=
\frac{E_i^{(b)}}{\sum_{j=1}^n E_j^{(b)}}.
\]
Ordinary empirical averages are then replaced by weighted averages. For example,
\[
\frac{1}{n}\sum_{i=1}^n f(\boldsymbol X_i)
\]
is replaced in posterior draw $b$ by
\[
\sum_{i=1}^n W_i^{(b)} f(\boldsymbol X_i).
\]

For posterior draw $b$, the fitted conditional probabilities and conditional
mean surface are recomputed using the sampled parameters. In the modified-order
direct estimator, define
\[
\hat g^{(b)}(\boldsymbol x,z)
:=
\sum_a
\hat p^{(b)}(a\mid \boldsymbol x,z)
\hat m^{(b)}(\boldsymbol x,z,a),
\]
where
\[
\hat m^{(b)}(\boldsymbol x,z,a)
=
\hat \E^{(b)}(Y\mid \boldsymbol x,z,a),
\]
and
\[
\hat h^{(b)}(\boldsymbol x)
:=
\sum_z
\hat p^{(b)}(z\mid \boldsymbol x)
\hat g^{(b)}(\boldsymbol x,z).
\]
The $b$th posterior draw of the modified-order direct components is then
\[
\hat\Delta_{\boldsymbol X}^{(b)}
=
\sum_{i=1}^n W_i^{(b)}
\left[
\hat h^{(b)}(\boldsymbol X_i)
-
\sum_{i'=1}^n W_{i'}^{(b)}
\hat h^{(b)}(\boldsymbol X_{i'})
\right]^2,
\]
\[
\hat\Delta_Z^{(b)}
=
\sum_{i=1}^n W_i^{(b)}
\sum_z
\hat p^{(b)}(z\mid \boldsymbol X_i)
\left\{
\hat g^{(b)}(\boldsymbol X_i,z)
-
\hat h^{(b)}(\boldsymbol X_i)
\right\}^2,
\]
\[
\hat\Delta_A^{(b)}
=
\sum_{i=1}^n W_i^{(b)}
\sum_z
\hat p^{(b)}(z\mid \boldsymbol X_i)
\sum_a
\hat p^{(b)}(a\mid \boldsymbol X_i,z)
\left\{
\hat m^{(b)}(\boldsymbol X_i,z,a)
-
\hat g^{(b)}(\boldsymbol X_i,z)
\right\}^2,
\]
and, for binary $Y$,
\[
\hat\Delta_{\mathrm{res}}^{(b)}
=
\sum_{i=1}^n W_i^{(b)}
\sum_z
\hat p^{(b)}(z\mid \boldsymbol X_i)
\sum_a
\hat p^{(b)}(a\mid \boldsymbol X_i,z)
\hat m^{(b)}(\boldsymbol X_i,z,a)
\{1-\hat m^{(b)}(\boldsymbol X_i,z,a)\}.
\]

For the topological-order ($Z \prec \boldsymbol X \prec A \prec Y$) estimator, 
the same Bayesian bootstrap draw induces both a posterior draw of the
empirical group distribution and posterior draws of the empirical
within-group distributions of $\boldsymbol X$. Thus, for posterior draw $b$,
\[
\hat p^{(b)}(z)
=
\sum_{i=1}^n W_i^{(b)} I(Z_i=z),
\]
and, within stratum $z$,
\[
W_{i\mid z}^{(b)}
=
\frac{
W_i^{(b)} I(Z_i=z)
}{
\sum_{j=1}^n W_j^{(b)} I(Z_j=z)
}.
\]
These weights replace the empirical stratum probabilities and the empirical
averages over $\boldsymbol X\mid Z=z$ in the topological decomposition. The
conditional models for $A\mid \boldsymbol X,Z$ and
$Y\mid \boldsymbol X,Z,A$ are recomputed using asymptotic normal parameter
draws, as above.

For the modified-order Bayes-inversion estimator, uncertainty in
$p(Z\mid\boldsymbol X)$ is propagated through the fitted model for
$p(\boldsymbol X\mid Z)$ together with the marginal distribution of $Z$.
For posterior draw $b$, let
$\hat p^{(b)}(Z=z)$ denote the corresponding draw of the marginal group
probability and let
$\hat f^{(b)}(\boldsymbol x\mid Z=z)$ denote the corresponding draw of the
conditional density of $\boldsymbol X$ given $Z=z$. Bayes' rule is then used
to form
\[
\hat p^{(b)}(Z=z\mid\boldsymbol X=\boldsymbol x)
=
\frac{
\hat p^{(b)}(Z=z)\,
\hat f^{(b)}(\boldsymbol x\mid Z=z)
}{
\sum_r
\hat p^{(b)}(Z=r)\,
\hat f^{(b)}(\boldsymbol x\mid Z=r)
}.
\]
The form of $\hat f^{(b)}(\boldsymbol x\mid Z=z)$ and the procedure used to
generate its uncertainty depend on the working model chosen for
$\boldsymbol X\mid Z$. The models for
$A\mid\boldsymbol X,Z$ and $Y\mid\boldsymbol X,Z,A$ are propagated using
the corresponding approximate posterior parameter draws as above.

This procedure is not intended to represent posterior inference under a
fully specified joint Bayesian model. Rather, it is a hybrid approximation
designed to propagate uncertainty through the same conditional distributions
and empirical integration steps used by the plug-in estimators. Parameters
of the fitted regression models are propagated using asymptotic Gaussian
approximations, while empirical distributions are propagated using the
Bayesian bootstrap. For the modified-order Bayes-inversion estimator, 
uncertainty in the marginal distribution of $Z$ and in the fitted model for
$\boldsymbol X\mid Z$ is propagated jointly into the Bayes inversion used
to obtain $p(Z\mid\boldsymbol X)$.
The conditional regression models are not refitted under the Bayesian bootstrap weights.

For any variance component $\Delta$, the approximate posterior sample is
$\hat\Delta^{(1)},\ldots,\hat\Delta^{(B_{\mathrm{post}})}$.
We summarize this sample by its posterior standard deviation and empirical
quantiles. The approximate 95\% posterior interval is
$
\left[
\hat q_{0.025},
\hat q_{0.975}
\right]$,
where $\hat q_\alpha$ is the empirical $\alpha$-quantile of
$\{\hat\Delta^{(b)}:b=1,\ldots,B_{\mathrm{post}}\}$.

\section{Simulation study}

We studied the finite-sample behavior of the proposed variance decomposition
estimators under a data-generating mechanism based on the running-example DAG in
Figure~\ref{fig:dag}. The objective of the simulation was to investigate the sensitivity
of the proposed point estimators to outcome model misspecification, penalization
and tuning parameter choice, as well as to produce evidence of the frequentist
properties of the proposed approximate Bayesian uncertainty quantification.

\subsection{Data-generating mechanism}
\label{sec:simulation_dgm}
The explanatory variables are sociodemographic group
$Z$, case-mix covariates $\boldsymbol X$, and hospital assignment $A$, and the
outcome $Y$ is a binary process-of-care indicator. The data-generating DAG
contains the arrows
$Z \to \boldsymbol X,$
$Z \to A$,
$Z \to Y$,
$\boldsymbol X \to A$,
$\boldsymbol X \to Y$ and
$A \to Y$.
Thus, $Z$ may affect the outcome directly and indirectly through case-mix,
hospital assignment, or both.

The variable $Z$ has three equally probable categories, denoted by
\emph{neutral}, \emph{advantaged}, and \emph{disadvantaged}. We associate with
these categories the direct-effect scores
$d_z \in \{0,1,-1\}$ and the case-mix severity scores
$s_z \in \{0,-1,1\}$, respectively. The opposite signs of the two score systems allow the advantaged
and disadvantaged groups to differ both in their clinical profiles and in
pathways operating conditionally on those profiles.

Conditional on $Z=z$, the $p$-dimensional case-mix vector is generated as
\[
\boldsymbol X \mid Z=z
\sim
N_p\left(
s_{ZX}s_z\boldsymbol\mu,\,
\boldsymbol\Sigma_X
\right),
\]
where $s_{ZX}=1.1$, $\boldsymbol\mu$ is a normalized vector with positive,
decreasing coordinates, and $\boldsymbol\Sigma_X$ is an exchangeable
correlation matrix with unit marginal variances and pairwise correlation
$\rho_X=0.25$. Consequently, the disadvantaged group has higher values of the
case-mix severity score on average, whereas the advantaged group has lower
values. We considered $p\in\{5,15\}$. In each case, only
$\lceil p/2\rceil$ coordinates of $\boldsymbol X$ had nonzero coefficients in
the hospital-assignment and outcome models. The coefficient vectors
$\boldsymbol\beta_A$ and $\boldsymbol\beta_Y$ used the same normalized,
positive, decreasing sparse pattern, whose exact construction is given in
Appendix~\ref{appendix:dgm}.

There are five hospitals, indexed by $a=1,\ldots,5$, with ordered hospital
scores $q_a\in\{-1,-0.5,0,0.5,1\}$.
Hospital assignment is generated from a multinomial logistic model,
\[
\Pr(A=a\mid \boldsymbol X,Z=z)
=
\frac{
 \exp\{\eta_a(\boldsymbol X,z)\}
}{
 \sum_{a'=1}^5 \exp\{\eta_{a'}(\boldsymbol X,z)\}
},
\]
where
\[
\eta_a(\boldsymbol X,z)
=
\alpha_a
+
s_{ZA}d_zq_a
+
s_{XA}(\boldsymbol\beta_A^\top\boldsymbol X)q_a.
\]
We set $\alpha_a=0$ for all $a$, $s_{ZA}=1$, and $s_{XA}=1.1$.
Thus, larger values of the case-mix score
$\boldsymbol\beta_A^\top\boldsymbol X$ increase the probability of assignment
to a hospital with a higher score. Conditional on case-mix, the advantaged
group is also shifted toward higher-score hospitals and the disadvantaged
group toward lower-score hospitals.

The outcome is generated according to
\[
Y\mid \boldsymbol X,Z=z,A=a
\sim
\operatorname{Bernoulli}\{m_a(\boldsymbol X,z)\},
\]
with
\begin{align}
\logit m_a(\boldsymbol X,z)
&=
\alpha_Y
+
s_{XY}\boldsymbol\beta_Y^\top\boldsymbol X
+
s_{AY}q_a
+
\lambda_z
+
\gamma_{ZA}d_zq_a.
\label{eq:simulation_outcome_model}
\end{align}
We set $\alpha_Y=-0.3$, $s_{XY}=1$, and $s_{AY}=1$. The parameter
$\gamma_{ZA}$ controls interaction between sociodemographic group and hospital
score. We considered
$\gamma_{ZA}\in\{0,0.6,1.0\}$.
When $\gamma_{ZA}>0$, the difference in outcome probability between higher- and
lower-score hospitals varies across the three groups. The data-generating
interaction is structured through the product $d_zq_a$, although the fitted
parametric interaction models described in Section~\ref{sec:simulation_estimators}
allow unrestricted interactions between the categorical variables $Z$ and
$A$.

For the baseline scenario $\gamma_{ZA}=0$, the group-specific outcome
intercepts are
$\lambda_z^{(0)}=s_{ZY}d_z$ and $s_{ZY}=0.8$.
For each $\gamma_{ZA}>0$, the intercepts $\lambda_z$ are recalibrated separately
within each group. Specifically, let
\[
\mu_z^{(0)}
=
\E_{\gamma_{ZA}=0,\lambda_z^{(0)}}(Y\mid Z=z)
\]
denote the group-specific marginal outcome mean under the baseline interaction
scenario. For each nonzero value of $\gamma_{ZA}$, $\lambda_z$ is chosen to
satisfy
\begin{align}
\E_{\gamma_{ZA},\lambda_z}(Y\mid Z=z)
&=
\mu_z^{(0)},
\qquad z=1,2,3,
\label{eq:simulation_calibration}
\end{align}
where the expectation integrates over both
$\boldsymbol X\mid Z=z$ and $A\mid\boldsymbol X,Z=z$. The calibration was
performed by Monte Carlo integration and one-dimensional root finding,
separately for each group.

This calibration preserves the group-specific marginal outcome means, and
hence the net between-group variation, as the strength of the $Z$--$A$
interaction changes. At the same time, the contributions operating through
the direct $Z\to Y$ pathway, the hospital pathway $Z\to A\to Y$, and their
interaction are altered. As a result, $\Delta_Z = V_Z\{\E(Y\mid Z)\}$
is approximately unchanged across the interaction scenarios even though the
conditional outcome response surfaces and the contributions of the remaining
components may change.

In the topological decomposition, the design and chosen parameter values create 
cancellation of the total group effect $\Delta_Z$ that combines all the distinct causal pathways. 
Under the modified ordering $\boldsymbol X\prec Z\prec A\prec Y$,
conditioning first on case-mix removes the cancellation attributable to
the pathways $Z \to \boldsymbol X \to Y$ and $Z \to \boldsymbol X \to A \to Y$, 
so that the remaining disparity due to the pathways $Z\to Y$ and $Z\to A\to Y$
can manifest.

The simulation scenarios crossed
$n\in\{500,1500\}$, $p\in\{5,15\}$ and $\gamma_{ZA}\in\{0,0.6,1.0\}$.
For each of the resulting 12 scenarios, the true topological- and
modified-order variance components were approximated by Monte Carlo
integration using a sample of size $200{,}000$ from the known
data-generating mechanism. For the modified-order decomposition, the required conditional probabilities
$\Pr(Z=z\mid\boldsymbol X)$ were obtained exactly by applying Bayes' rule to
the known $\Pr(\boldsymbol X\mid Z=z)$ distributions and the marginal
probabilities $\Pr(Z=z)$. The numerical parameter values and coefficient constructions are
summarized in Appendix~\ref{appendix:dgm}.

\subsection{Estimators being compared}\label{sec:simulation_estimators}

For each simulated data set we compute both the topological-order decomposition
for $Z \prec \boldsymbol X \prec A \prec Y$
and the modified-order decomposition for
$\boldsymbol X \prec Z \prec A \prec Y$.
We consider two classes of plug-in estimators.

The first class consists of parametric model-based estimators using 
the order-specific likelihood factorization discussed above.
For the topological ordering, we estimate the group distribution empirically,
use the empirical distribution of $\boldsymbol X$ within each group stratum, and
fit multinomial logistic and logistic regression models for
$A \mid \boldsymbol X,Z$
and $Y \mid \boldsymbol X,Z,A$,
respectively.
For the modified ordering, we estimate the empirical distribution of
$\boldsymbol X$ and fit working models for
$Z \mid \boldsymbol X$,
$A \mid \boldsymbol X,Z$,
$Y \mid \boldsymbol X,Z,A$,
using multinomial logistic, multinomial logistic, and logistic regression,
respectively.
The fitted conditional mean surface
\[
\hat m(\boldsymbol x,z,a)=\hat \E(Y \mid \boldsymbol x,z,a)
\]
and the fitted conditional probabilities are then inserted into the corresponding
variance decomposition formulas.
For the modified ordering we also consider an alternative parametric estimator
that fits a multivariate normal model for $\boldsymbol X \mid Z$ and recovers
$\Pr(Z \mid \boldsymbol X)$ by Bayes inversion.
The outcome logistic model specifications included both 
the main-effects only model $Y\sim \boldsymbol X+Z+A$
and the interaction model $Y\sim \boldsymbol X+Z*A$.
To gauge sensitivity to penalization, we also included estimators based
on Firth-corrected logistic outcome models.

The second class of estimators replaces the parametric outcome model
$Y \mid \boldsymbol X,Z,A$ by a machine-learning model while keeping the
remaining factorization models unchanged.
Specifically, we fit a gradient-boosted tree model (XGBoost) for the binary
outcome using predictors $(\boldsymbol X,Z,A)$.
Hyperparameters are tuned by cross-validated out-of-sample log-loss, reflecting
standard prediction-oriented machine-learning practice. Details of the hyperparameter
tuning are specified in Appendix \ref{appendix:tuning}.
The fitted outcome probabilities are then plugged into the same topological and
modified-order decomposition formulas as above.
This provides a less parametric plug-in estimator for the conditional mean
surface, while preserving the order-specific factorization for the remaining
distributional components. 

For each scenario, the finite-sample performance of the estimators is assessed by
comparing the estimated variance components to the scenario-specific truth
computed under the known data-generating mechanism.
We report empirical bias, standard deviation, and root mean squared error for
each variance component.
The parametric estimators serve as approximately correctly specified benchmarks,
whereas the machine-learning outcome model is included to illustrate how tuning
for out-of-sample prediction may still produce non-negligible plug-in bias in
the estimated variance decomposition components.

An uncertainty simulation study was conducted under all 12 combinations of
$n$, $p$, and $\gamma_{ZA}$ described in
Section~\ref{sec:simulation_dgm}. 
For each simulated data set, we constructed approximate Bayesian 95\%
intervals and compared them with the scenario-specific true variance components. 
Uncertainty intervals were not evaluated for the Firth-corrected or XGBoost plug-in estimators.
For each variance component, interval performance was summarized by the
ratio of empirical standard deviation of the point estimator across simulation
replicates to the mean posterior standard deviation and empirical coverage of the
95\% interval. Comparing the empirical and posterior
standard deviations assesses whether the approximate Bayesian procedure
captures the repeated-sampling variability of the plug-in estimator. Coverage
of the true component additionally reflects bias caused by outcome-model
misspecification.

All simulations and analyses were conducted in R version 4.6.1
\citep{R2026}. XGBoost models were fitted using the
\texttt{xgboost} package \citep{xgboost2026}, and Firth-corrected logistic
regression models using the \texttt{logistf} package \citep{logistf2025}.
Code reproducing the simulation study and figures is available at
\url{https://github.com/saarelao/causal-decompositions-simulations}.

\subsection{Results}

Figures~\ref{fig:simulation_error_gamma0} and
\ref{fig:simulation_error_gamma1} summarize the point-estimation results for
two endpoint interaction scenarios. When $\gamma_{ZA}=0$, the
main-effects outcome model is correctly specified and both parametric
estimators show little systematic bias for either the topological or
modified-order decomposition. Including the unnecessary $Z$-by-$A$
interaction produces broadly similar point estimates, although some small
finite-sample departures remain. Sampling variability decreases substantially
when the sample size increases from $n=500$ to $n=1500$, while the
higher-dimensional setting $p=15$ generally produces greater variability,
particularly at the smaller sample size.

The XGBoost plug-in estimator exhibits more persistent component-specific
bias. Most notably, it tends to underestimate the modified-order $\Delta_Z$ component
and the topological-order $\Delta_{\boldsymbol X}$ component, and consequently, 
tends to overestimate the residual component. These differences decrease only
partly with increasing sample size. Thus, good prediction-oriented fitting
does not by itself guarantee accurate estimation of the nonlinear variance
decomposition functionals.

\begin{figure}[!htbp]
\centering
\includegraphics[width=0.95\textwidth]{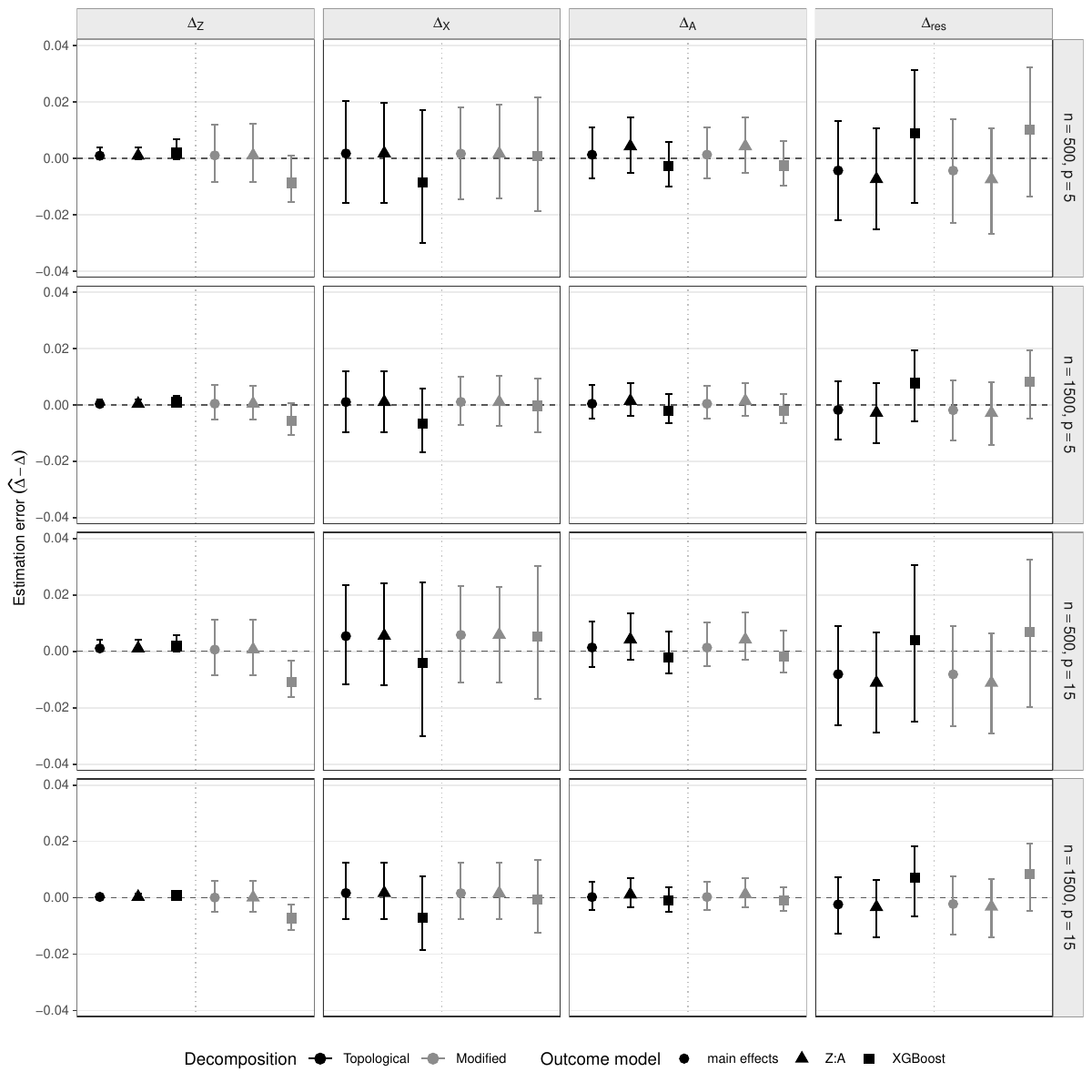}
\caption{
Monte Carlo performance of the point estimators when the data-generating mechanism outcome model has no
$Z$-by-$A$ interaction ($\gamma_{ZA}=0$).
The four columns correspond to the variance components
$\Delta_Z$, $\Delta_{\boldsymbol X}$, $\Delta_A$, and
$\Delta_{\mathrm{res}}$, and the rows correspond to the four combinations of
sample size $n$ and case-mix dimension $p$.
Points show the mean estimation error
$\hat\Delta-\Delta$ across 500 Monte Carlo replicates, with vertical bars
showing the empirical 2.5th and 97.5th percentiles of the estimation-error
distribution.
The horizontal dashed line denotes zero estimation error.
Results are shown for the topological and modified-order decompositions using
a parametric main-effects outcome model, a parametric outcome model including
the $Z$-by-$A$ interaction, and XGBoost.
}
\label{fig:simulation_error_gamma0}
\end{figure}

\begin{figure}[!htbp]
\centering
\includegraphics[width=0.95\textwidth]{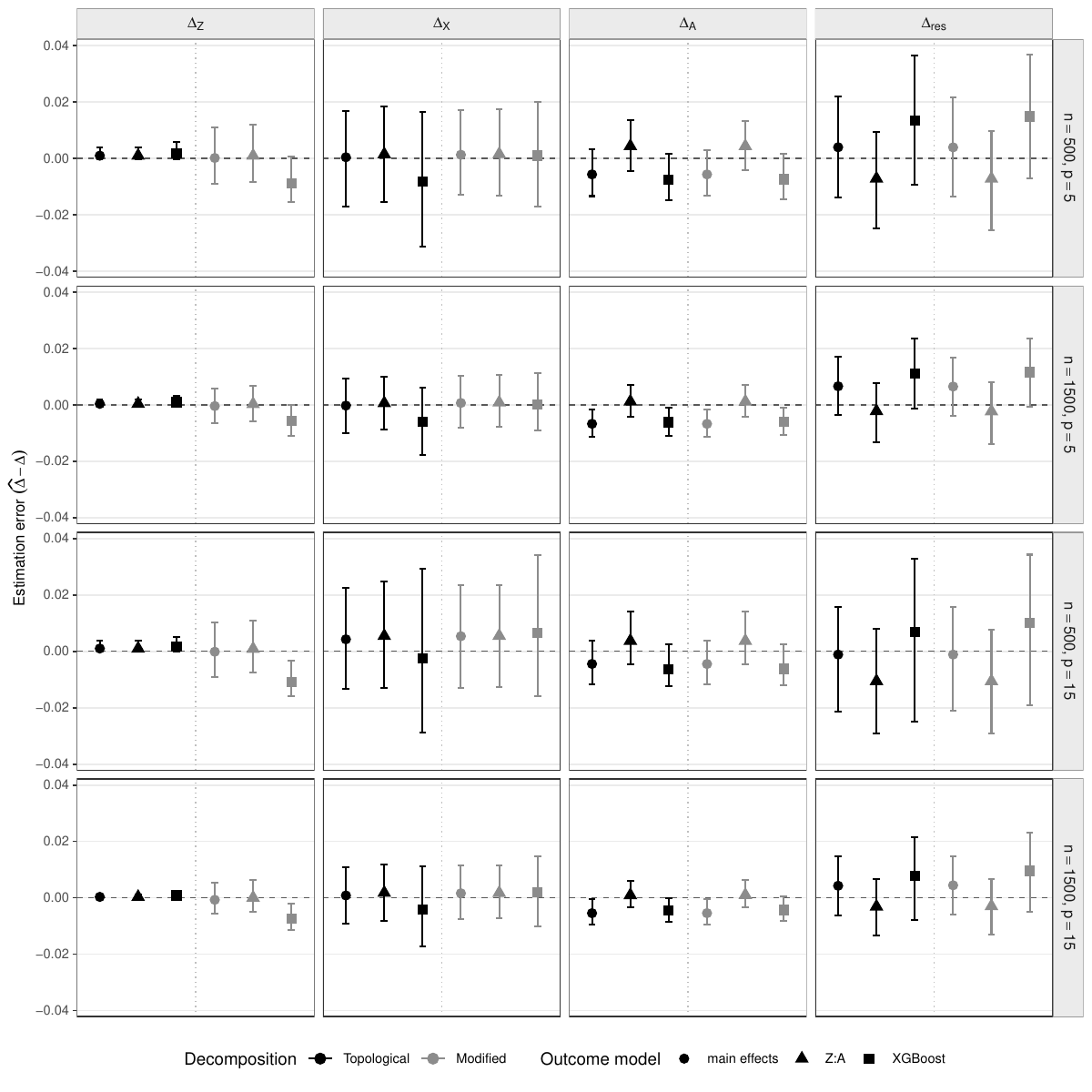}
\caption{
Monte Carlo performance of the point estimators when the data-generating mechanism outcome model contains
a $Z$-by-$A$ interaction with $\gamma_{ZA}=1$.
The four columns correspond to the variance components
$\Delta_Z$, $\Delta_{\boldsymbol X}$, $\Delta_A$, and
$\Delta_{\mathrm{res}}$, and the rows correspond to the four combinations of
sample size $n$ and case-mix dimension $p$.
Points show the mean estimation error
$\hat\Delta-\Delta$ across 500 Monte Carlo replicates, with vertical bars
showing the empirical 2.5th and 97.5th percentiles of the estimation-error
distribution.
The horizontal dashed line denotes zero estimation error.
Results are shown for the topological and modified-order decompositions.
The main-effects outcome model omits the true $Z$-by-$A$ interaction, whereas
the interaction model includes it; XGBoost provides a flexible
prediction-based alternative.
}
\label{fig:simulation_error_gamma1}
\end{figure}

When $\gamma_{ZA}=1$, the data-generating outcome model contains a substantial
$Z$-by-$A$ interaction. The fitted parametric model including this interaction
continues to estimate the decomposition components with relatively little
bias, particularly at $n=1500$. In contrast, omission of the interaction
produces systematic bias that is concentrated mainly in the hospital and
residual components: $\Delta_A$ tends to be underestimated and
$\Delta_{\mathrm{res}}$ overestimated. The group and case-mix components are
less sensitive to this particular misspecification. XGBoost again shows
non-negligible plug-in bias in several components, with patterns broadly
similar to those seen when $\gamma_{ZA}=0$.

The complete sampling distributions are shown in Supplementary
Figures~\ref{fig:supp_topological_gamma0}--\ref{fig:supp_modified_gamma1}.
These figures additionally include the intermediate interaction scenario
$\gamma_{ZA}=0.6$, the Firth-corrected interaction outcome model, and, for
the modified-order decomposition, the alternative implementation based on
Bayes inversion from the fitted distribution of $\boldsymbol X\mid Z$.
The intermediate scenario shows the expected progression between the two
endpoint cases. The direct and Bayes-inversion implementations of the
modified-order decomposition give very similar results when combined
with the same outcome model. The supplementary figures also illustrate the
substantive difference between the two decomposition targets. By design, the
topological-order $\Delta_Z$ component is essentially zero across interaction
scenarios, whereas under the modified ordering the corresponding
$\Delta_Z$ component accounts for approximately 7--8\% of the total outcome
variance, illustrating how the modified-order decomposition can capture
a path-specific signal that cancels out in the topological decomposition total effect.

Figures~\ref{fig:simulation_coverage_gamma0} and
\ref{fig:simulation_coverage_gamma1} summarize the empirical coverage of the
nominal 95\% approximate Bayesian posterior intervals. Even under correct
outcome-model specification, coverage is component-dependent. When
$\gamma_{ZA}=0$, coverage is generally close to nominal for
$\Delta_{\boldsymbol X}$ and for $\Delta_A$ under the main-effects model,
but is lower for some other components. In particular, coverage of the
topological-order $\Delta_Z$ intervals is affected by 
the true values being close to zero, in which case the asymmetric sampling
distributions are not well captured by the normal approximations. The
modified-order $\Delta_Z$ component, whose true value is appreciably away
from zero, has substantially better coverage.

\begin{figure}[!htbp]
\centering
\includegraphics[width=0.95\textwidth]{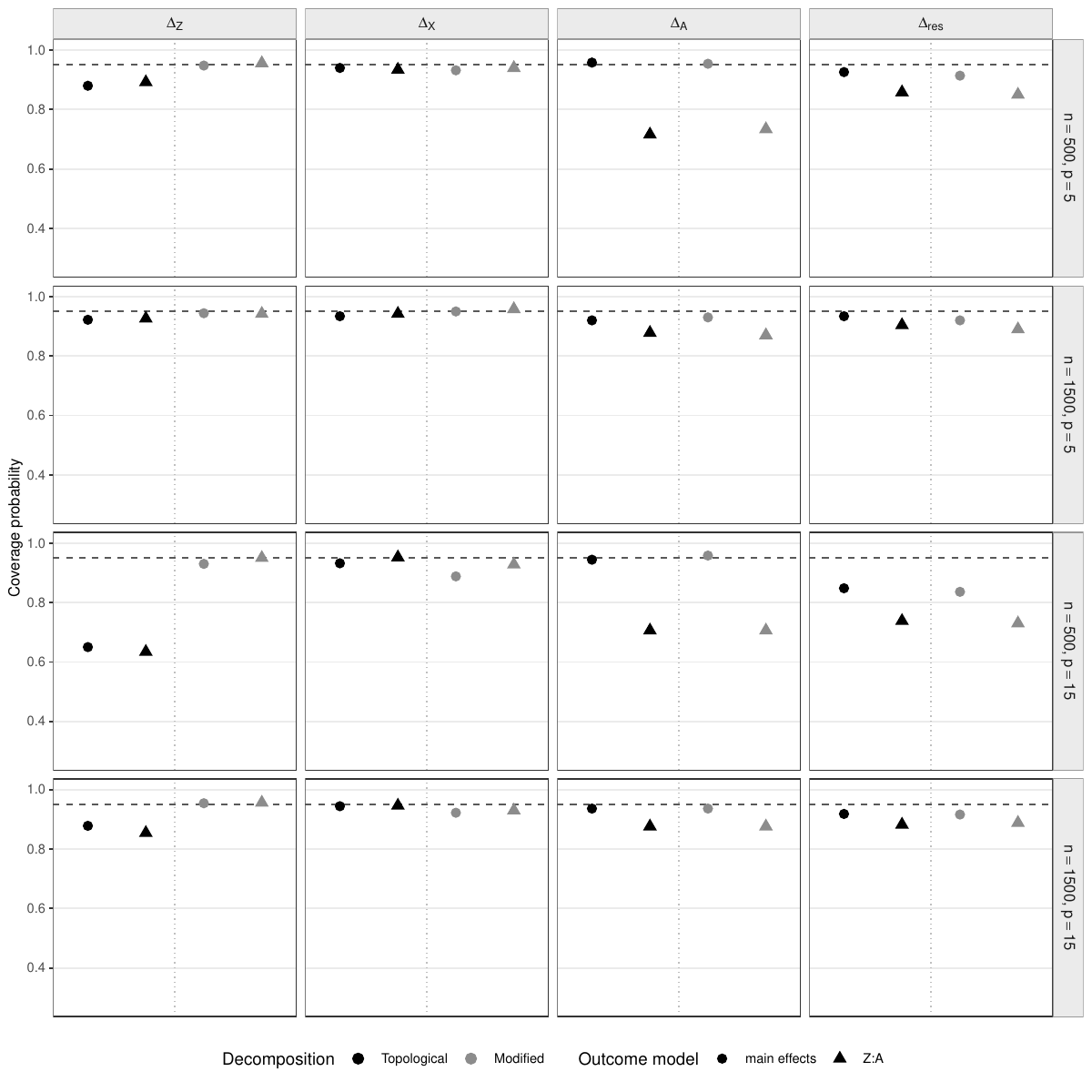}
\caption{
Empirical coverage probabilities of the nominal 95\% approximate Bayesian
posterior intervals when the data-generating mechanism outcome model has no $Z$-by-$A$ interaction
($\gamma_{ZA}=0$).
The four columns correspond to
$\Delta_Z$, $\Delta_{\boldsymbol X}$, $\Delta_A$, and
$\Delta_{\mathrm{res}}$, and the rows correspond to the four combinations of
sample size $n$ and case-mix dimension $p$.
Coverage is evaluated across 500 Monte Carlo replicates for the topological and
modified-order decompositions using the parametric main-effects and
$Z$-by-$A$ interaction outcome models.
The horizontal dashed line indicates the nominal coverage probability of
0.95.
}
\label{fig:simulation_coverage_gamma0}
\end{figure}

\begin{figure}[!htbp]
\centering
\includegraphics[width=0.95\textwidth]{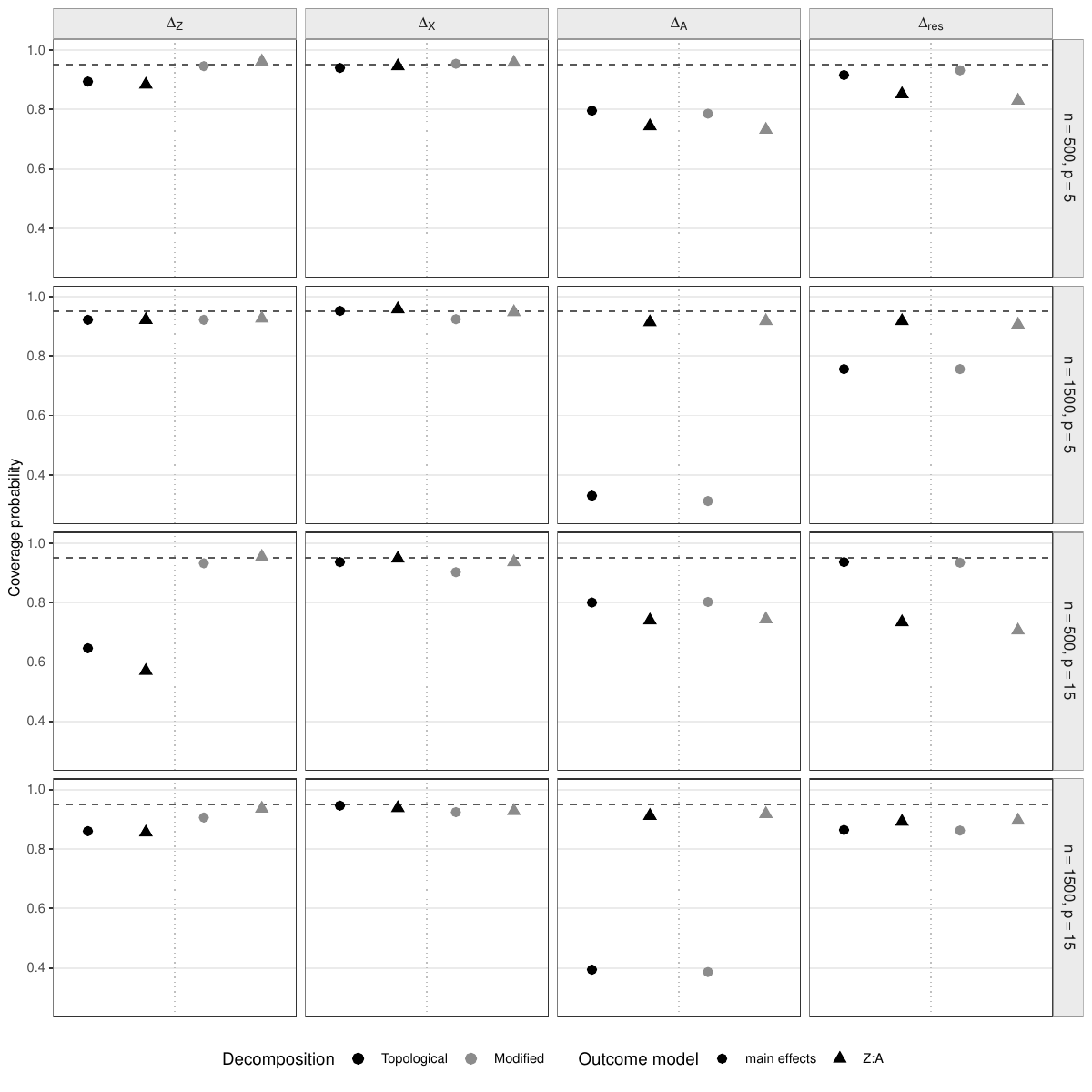}
\caption{
Empirical coverage probabilities of the nominal 95\% approximate Bayesian
posterior intervals when the data-generating mechanism outcome model contains a $Z$-by-$A$ interaction
with $\gamma_{ZA}=1$.
The four columns correspond to
$\Delta_Z$, $\Delta_{\boldsymbol X}$, $\Delta_A$, and
$\Delta_{\mathrm{res}}$, and the rows correspond to the four combinations of
sample size $n$ and case-mix dimension $p$.
Coverage is evaluated across 500 Monte Carlo replicates for the topological and
modified-order decompositions using the parametric main-effects and
$Z$-by-$A$ interaction outcome models.
The main-effects model is misspecified in this setting, whereas the interaction
model contains the data-generating outcome model.
The horizontal dashed line indicates the nominal coverage probability of
0.95.
}
\label{fig:simulation_coverage_gamma1}
\end{figure}

The unnecessary interaction model also shows some undercoverage when
$\gamma_{ZA}=0$, particularly for $\Delta_A$ and
$\Delta_{\mathrm{res}}$ at the smaller sample size. When the interaction is
present, however, the distinction between the two outcome-model
specifications becomes much more pronounced. Under $\gamma_{ZA}=1$, the
interaction model generally retains reasonable coverage for the regular
components, with performance improving as the sample size increases. In
contrast, the misspecified main-effects model results in undercoverage,
especially for $\Delta_A$, explained by the bias in the point estimates. 
The undercoverage can become more pronounced as
$n$ increases because the systematic plug-in bias persists while the
uncertainty intervals become narrower.

Supplementary
Figures~\ref{fig:supp_topological_uncertainty_gamma0}--%
\ref{fig:supp_modified_uncertainty_gamma1}
provide the full uncertainty diagnostics for all three interaction scenarios.
In addition to coverage, they compare the empirical standard deviation of
the point estimator with the mean posterior standard deviation. For many
components these quantities are reasonably similar, particularly under the
correctly specified interaction model and for $\Delta_{\boldsymbol X}$.
The supplementary results also show broadly similar uncertainty behavior for the direct and
Bayes-inversion implementations of the modified-order decomposition.
Overall, the simulations indicate that the proposed plug-in and approximate
Bayesian procedures perform reasonably under appropriate outcome-model
specification, but that both point estimation and interval coverage can be
sensitive to outcome-model misspecification and to components lying near the
boundary at zero.

\FloatBarrier

\section{Discussion}

We developed a general graph-based framework for assigning causal
interpretations to components of ordered variance decompositions. The main
distinction from existing causal variance decomposition and attribution
approaches is that identification is assessed component by component in the
full causal graph, without requiring causal sufficiency with respect to the variables included in the decomposition, 
and that the conditioning order is treated as part of the scientific estimand rather than
as an arbitrariness that must necessarily be averaged away. The framework also
allows scientifically motivated non-topological orderings, with preceding
descendants interpreted as controlled variables, and permits vector-valued
clustering when a coarser attribution is more meaningful. In our applied context,
our controlled-effect construction can help reveal between-group disparities 
that remain after blocking specific causal pathways.

Variance decompositions are attractive for causal attribution because they
express contributions on a common additive scale. After normalization by the
total outcome variance, each component can be interpreted as the proportion
of overall variation attributable to a particular source, yielding a
dimensionless measure that is invariant to linear rescaling of the outcome.
The construction also accommodates categorical explanatory variables
directly, which is particularly useful here for polytomous group and hospital
variables, and extends naturally to vector-valued nodes, allowing related
variables to be attributed jointly as a cluster.

While our simulation results support validity of the proposed procedures
under correct model specification, they reveal sensitivity to misspecification and hyperparameter tuning.
Thus, the simulation study highlights estimation as an important area for further development.
The proposed estimators are plug-in estimators of nonlinear
functionals involving several fitted conditional distributions, and the
results show that bias in these nuisance-model estimates can
translate into bias in the resulting variance components. This
was apparent in particular for the prediction-oriented XGBoost estimator,
illustrating that good predictive performance does not necessarily imply
accurate estimation of the decomposition functional. A natural next step is
therefore to derive efficient influence functions for the individual
components and use them to construct bias-corrected one-step or targeted
estimators. Such methods could also permit flexible nuisance estimation with
cross-fitting while reducing first-order sensitivity of the variance
components to nuisance-model estimation error. Related work on do-Shapley values 
has used doubly robust, debiased machine-learning estimators based on orthogonal-style 
representations of the underlying interventional means \citep{jung2022measuring}.

Bayesian uncertainty quantification for the variance components can also be developed further. 
The approximate Bayesian procedure considered here combines large-sample Gaussian
approximations for regression parameters with Bayesian-bootstrap perturbations
of empirical distribution factors. Although computationally convenient, the
simulation results show that the resulting nominal 95\% intervals can have
undercoverage in particular if the sampling distribution is concentrated near zero. 
A proper Bayesian implementation based on posterior sampling by MCMC could avoid reliance on
Gaussian approximations and propagate uncertainty jointly through the fitted
conditional models. Such an approach may improve finite-sample posterior
approximation, although it would not by itself remove bias arising from
misspecified outcome or other nuisance models. Combining richer Bayesian
models with appropriate regularization therefore represents another useful
direction for investigation.

The present analysis treats the causal graph and the scientifically relevant ordering or clustering as given.
In applications, these choices may themselves be uncertain, and sensitivity
analysis over plausible graphs, orderings, or cluster definitions may be
important. Likewise, although clustering can resolve some ambiguities caused
by parallel causal structures, the appropriate level of granularity is
ultimately substantive and can change the attribution estimand.
Overall, the results suggest that graphical identification of causal variance
components is only one part of the problem. Once the causal estimand has been
defined by the graph and ordering, robust estimation of the resulting
observed-data functional becomes equally important. Developing estimators and
inferential procedures that retain the interpretability of the present
framework while being less sensitive to nuisance-model specification is
therefore a central direction for future work.

An additional qualification concerns sociodemographic group variables such
as race or ethnicity. Because such attributes are not themselves
well-defined intervention targets, causal disparity analyses often avoid
interpreting contrasts indexed by group membership as effects of literally
intervening on group status \citep{naimi2016mediation,jackson2021meaningful}.
We do not require such a literal intervention interpretation here. Rather,
the causal framework is used to distinguish pathways from group membership
to the outcome and to formalize which descendant covariates are controlled
or regarded as allowable for adjustment. The resulting group component is
therefore interpreted as a pathway-structured disparity contrast, rather
than as the effect of an intervention that changes group membership.
To quantify disparities that remain after hypothetical interventions on
downstream variables that are well-defined intervention targets, the present
framework can be extended to interventional causal variance decompositions,
in which the variance of an interventional potential outcome is decomposed.
For example, hospital assignment may be drawn from a specified intervention
distribution, as in \citet{chen2020causal}.

\subsection*{Acknowledgement}

The scientific questions, conceptual contributions, and substantive
interpretations in this work originated with the authors, who take full
responsibility for the content of the manuscript. In preparing this work,
the authors experimented with an AI-assisted research and writing workflow. 
Chatbots (Microsoft 365 Copilot and ChatGPT, using GPT-5.5 and
GPT-5.6 models) were used to assist with drafting manuscript sections and
developing mathematical notation and formulations based on ideas and
instructions provided by the authors. All resulting text and mathematical
content were reviewed, checked, and edited by the authors. A coding agent
(OpenAI Codex, using GPT-5.5 and GPT-5.6 models) was used to implement the
simulation studies and assist in summarizing their results.

The work of OS was supported by Discovery Grants (RGPIN-2020-05920 and RGPIN-2026-06817) 
from the Natural Sciences and Engineering Research Council of Canada. The work of JK was supported by the Research Council of Finland (grant no. 368935).

\bibliographystyle{apalike}
\bibliography{refs_updated}

\appendix

\setcounter{figure}{0}
\renewcommand{\thefigure}{A\arabic{figure}}
\renewcommand{\figurename}{Supplementary Figure}

\section{Supplementary material: simulation study details and full results}

\subsection{Additional details of the data-generating mechanism}
\label{appendix:dgm}

This appendix provides the parameter constructions and numerical procedures
needed to reproduce the data-generating mechanism described in
Section~\ref{sec:simulation_dgm}.

\paragraph{Case-mix parameters.}

The unnormalized mean-direction vector
$\widetilde{\boldsymbol\mu}
=(\widetilde\mu_1,\ldots,\widetilde\mu_p)^\top$
has entries
\[
\widetilde\mu_j
=
1-\frac{0.4(j-1)}{p-1},
\qquad j=1,\ldots,p.
\]
Thus, its entries are equally spaced from $1$ to $0.6$. The normalized vector
used in the conditional mean of $\boldsymbol X$ is
\[
\boldsymbol\mu
=
\frac{\widetilde{\boldsymbol\mu}}
{\|\widetilde{\boldsymbol\mu}\|_2}.
\]
The common covariance matrix of $\boldsymbol X\mid Z$ is
\[
\boldsymbol\Sigma_X
=
(1-\rho_X)\boldsymbol I_p
+
\rho_X\boldsymbol 1_p\boldsymbol 1_p^\top,
\qquad
\rho_X=0.25.
\]

Let
\[
r=\left\lceil\frac{p}{2}\right\rceil
\]
denote the number of active case-mix variables. The unnormalized sparse
coefficient vector
$\widetilde{\boldsymbol\beta}
=(\widetilde\beta_1,\ldots,\widetilde\beta_p)^\top$
has entries
\[
\widetilde\beta_j
=
\begin{cases}
\displaystyle
1-\frac{0.5(j-1)}{r-1},
& j=1,\ldots,r,\\[8pt]
0,
& j=r+1,\ldots,p.
\end{cases}
\]
Thus, the first $r$ entries are equally spaced from $1$ to $0.5$, and the
remaining $p-r$ entries are zero. The normalized coefficient vector is
\[
\boldsymbol\beta
=
\frac{\widetilde{\boldsymbol\beta}}
{\|\widetilde{\boldsymbol\beta}\|_2},
\]
and the same coefficient pattern is used in the hospital-assignment and
outcome models:
$\boldsymbol\beta_A
=
\boldsymbol\beta_Y
=
\boldsymbol\beta$.

\paragraph{Fixed numerical parameters.}
The parameters held fixed across simulation scenarios are
$s_{ZX}=1.1$,
$s_{ZA}=1$,
$s_{XA}=1.1$,
$s_{AY}=1$,
$s_{ZY}=0.8$,
$s_{XY}=1$ and
$\alpha_Y=-0.3$.
All hospital-assignment intercepts are zero,
$\alpha_a=0$, $a=1,\ldots,5$,
and the hospital scores are
$(q_1,\ldots,q_5)=(-1,-0.5,0,0.5,1)$.

\paragraph{Calibration of the group-specific outcome intercepts.}
For $\gamma_{ZA}>0$, the group-specific intercept $\lambda_z$ is obtained by
solving
\[
F_z(\lambda;\gamma_{ZA})=\mu_z^{(0)},
\]
where
\begin{align*}
F_z(\lambda;\gamma_{ZA})
&=
\int_{\mathbb R^p}
\sum_{a=1}^5
\Pr(A=a\mid \boldsymbol x,Z=z)
\\
&\quad\times
\textrm{expit}\left\{
\alpha_Y
+s_{XY}\boldsymbol\beta_Y^\top\boldsymbol x
+s_{AY}q_a
+\lambda
+\gamma_{ZA}d_zq_a
\right\}
\\
&\quad\times
\phi_p\!\left(
\boldsymbol x;
s_{ZX}s_z\boldsymbol\mu,
\boldsymbol\Sigma_X
\right)
\,d \boldsymbol x.
\end{align*}
and
\[
\mu_z^{(0)}
=
F_z\left(s_{ZY}d_z;0\right).
\]
The integral was approximated using $200{,}000$ draws from
$\boldsymbol X\mid Z=z$. The same draws were retained for all candidate values
of $\lambda$ within a group, and the resulting one-dimensional equation was
solved numerically. Calibration was regarded as successful when the absolute
difference between the achieved and target group-specific marginal means was
less than $0.002$.

\paragraph{Monte Carlo population values.}
The population topological- and modified-order variance components were
approximated using Monte Carlo samples of size $200{,}000$. For the modified
ordering, the conditional group probabilities were evaluated from the known
data-generating mechanism as
\[
\Pr(Z=z\mid\boldsymbol X=\boldsymbol x)
=
\frac{
 \Pr(Z=z)\,
 \phi_p\!\left(
 \boldsymbol x;
 s_{ZX}s_z\boldsymbol\mu,
 \boldsymbol\Sigma_X
 \right)
}{
 \sum_{z'}
 \Pr(Z=z')\,
 \phi_p\!\left(
 \boldsymbol x;
 s_{ZX}s_{z'}\boldsymbol\mu,
 \boldsymbol\Sigma_X
 \right)
},
\]
where
$\phi_p(\,\cdot\,;\boldsymbol m,\boldsymbol\Sigma)$ denotes the
$p$-variate normal density with mean $\boldsymbol m$ and covariance matrix
$\boldsymbol\Sigma$.

As an implementation check, when $\gamma_{ZA}=0$, the population components
from the interaction-capable data-generating mechanism were required to agree
with those from the corresponding implementation without the interaction,
within a Monte Carlo tolerance of $0.003$.

\subsection{XGBoost tuning specification}
\label{appendix:tuning}

For the machine-learning plug-in estimators, the binary outcome regression
\[
Y \mid \boldsymbol X,Z,A
\]
is replaced by a gradient-boosted tree model fit using XGBoost with the binary
logistic objective. The remaining factors in the order-specific likelihood
factorizations are estimated as for the corresponding parametric estimators.
In particular, the modified-order estimator uses multinomial logistic models
for $Z\mid\boldsymbol X$ and $A\mid\boldsymbol X,Z$, whereas the
topological-order estimator uses the empirical distribution of $Z$ and the
empirical distribution of $\boldsymbol X$ within levels of $Z$.

The predictors supplied to the XGBoost outcome model are
$(\boldsymbol X,Z,A)$. No explicit $Z$--$A$ product terms are included in the
design matrix, but interactions between $Z$ and $A$, as well as interactions
involving $\boldsymbol X$, can be represented through successive tree splits.

Hyperparameters are selected using five-fold cross-validation with
out-of-sample binary log-loss as the evaluation criterion. The tuning
grid was chosen as
$\eta
\in
\{0.03,\;0.05,\;0.1,\;0.2\}$,
$\texttt{max\_depth}
\in
\{2,\;3,\;4,\;5,\;6\}$, and
$\texttt{min\_child\_weight}
\in
\{0.5,\;1,\;2,\;3\}$.
The tuning procedure therefore compares
$4\times 5\times 4=80$
hyperparameter combinations.

The remaining XGBoost hyperparameters are held fixed at
$\texttt{subsample}=0.8$,
$\texttt{colsample\_bytree}=0.8$,
$\lambda=1$,
$\alpha=0$ and
$\gamma=0$.
For each hyperparameter combination, training is allowed to continue for at
most $2{,}000$ boosting rounds, with early stopping after 50 rounds without
improvement in the cross-validated validation-set log-loss. The number of
boosting rounds associated with the lowest validation loss is retained for
that combination.

The selected hyperparameter combination is the one attaining the smallest
cross-validated log-loss at its retained number of boosting rounds. The final
model is then fit to the full simulated data set using the selected
hyperparameters and number of boosting rounds. Its predictions provide
\[
\hat m(\boldsymbol x,z,a)
=
\widehat{\Pr}(Y=1\mid
\boldsymbol X=\boldsymbol x,Z=z,A=a),
\]
which is inserted into the topological- and modified-order variance
decomposition formulas.

This tuning strategy reflects standard prediction-oriented
machine-learning practice: model complexity is selected to optimize
out-of-sample predictive performance rather than the variance decomposition
estimand. The XGBoost simulations therefore assess whether a flexible outcome
model tuned for predictive accuracy also performs well when its predictions
are used inside the nonlinear variance decomposition functionals.

\subsection{Uncertainty simulation specification}
\label{appendix:uncertainty}

The uncertainty simulation was conducted under all 12 data-generating
scenarios defined in Section~\ref{sec:simulation_dgm}. We considered two
ordinary logistic-regression specifications for the conditional outcome mean:
\[
\logit\Pr(Y=1\mid\boldsymbol X,Z,A)
=
\theta_0
+\boldsymbol\theta_X^\top\boldsymbol X
+\boldsymbol\theta_Z^\top\boldsymbol d_Z(Z)
+\boldsymbol\theta_A^\top\boldsymbol d_A(A),
\]
and
\[
\logit\Pr(Y=1\mid\boldsymbol X,Z,A)
=
\theta_0
+\boldsymbol\theta_X^\top\boldsymbol X
+\boldsymbol\theta_Z^\top\boldsymbol d_Z(Z)
+\boldsymbol\theta_A^\top\boldsymbol d_A(A)
+\boldsymbol\theta_{ZA}^\top
\left\{
\boldsymbol d_Z(Z)\otimes\boldsymbol d_A(A)
\right\},
\]
where $\boldsymbol d_Z(Z)$ and $\boldsymbol d_A(A)$ denote dummy-variable
representations of the categorical variables $Z$ and $A$, respectively.
The second specification includes an unrestricted $Z$-by-$A$ interaction
through all products of the corresponding dummy variables. The main-effects model is correctly
specified when $\gamma_{ZA}=0$ and misspecified when $\gamma_{ZA}>0$, whereas
the interaction model contains the structured data-generating interaction in
all scenarios.

Each outcome-model specification was combined with the following three
decomposition implementations:
\begin{enumerate}
\item the topological-order estimator for
$Z\prec\boldsymbol X\prec A\prec Y$;
\item the modified-order estimator for
$\boldsymbol X\prec Z\prec A\prec Y$,
using a directly fitted multinomial logistic model for
$Z\mid\boldsymbol X$; and
\item the same modified-order estimator with
$\Pr(Z\mid\boldsymbol X)$ obtained by Bayes inversion from a fitted
multivariate normal model for $\boldsymbol X\mid Z$.
\end{enumerate}
Thus, six ordinary-logistic estimator configurations were evaluated in each
data-generating scenario. Uncertainty intervals were not computed for the
Firth-corrected logistic or XGBoost plug-in estimators.

For each simulated data set and estimator configuration, the relevant
conditional models were fit once. Approximate posterior draws of the
regression parameters were generated from
\[
N\left(
\widehat{\boldsymbol\theta},
\widehat{\operatorname{Var}}
(\widehat{\boldsymbol\theta})
\right),
\]
using the estimated covariance matrix from the corresponding ordinary maximum
likelihood fit. Under the order-specific likelihood factorization, draws for
the different fitted conditional models were generated independently.

Bayesian bootstrap weights were used to propagate uncertainty in empirical
integration distributions. The conditional regression models were not refit
under these weights. For the topological-order estimator, the weights were used
in estimating the empirical distribution of $Z$ and the empirical
distributions of $\boldsymbol X$ within levels of $Z$. For the modified-order
estimators, the weights were used in integration over the empirical marginal
distribution of $\boldsymbol X$. For the Bayes-inversion estimator, they were
also used in estimating $\Pr(Z=z)$, the group-specific means of
$\boldsymbol X\mid Z=z$, and the common covariance matrix of
$\boldsymbol X\mid Z$.

For each simulation replicate $r=1,\ldots,B$, predictions from all required
conditional models were recalculated for each posterior draw and inserted
into the appropriate variance decomposition functional. This produced
posterior draws
\[
\hat\Delta_r^{(1)},\ldots,
\hat\Delta_r^{(B_{\mathrm{post}})}
\]
for every variance component. The approximate 95\% interval for replicate
$r$ was defined by the empirical posterior quantiles
\[
\left[
\hat q_{0.025,r},
\hat q_{0.975,r}
\right].
\]

Across the $B$ repeated simulation samples within each scenario, interval
performance for a component with true value $\Delta_0$ was summarized by
\[
\mathrm{EmpSD}
=
\operatorname{sd}
\left\{
\hat\Delta_1,\ldots,\hat\Delta_B
\right\},
\]
\[
\overline{\mathrm{PostSD}}
=
\frac{1}{B}
\sum_{r=1}^{B}
\widehat{\operatorname{sd}}_{\mathrm{post},r},
\]
\[
\mathrm{Coverage}
=
\frac{1}{B}
\sum_{r=1}^{B}
I\left\{
\Delta_0
\in
\left[
\hat q_{0.025,r},
\hat q_{0.975,r}
\right]
\right\},
\]
and
\[
\overline{\mathrm{Length}}
=
\frac{1}{B}
\sum_{r=1}^{B}
\left(
\hat q_{0.975,r}
-
\hat q_{0.025,r}
\right).
\]
Here $\hat\Delta_r$ is the point estimate from simulation replicate $r$,
$\widehat{\operatorname{sd}}_{\mathrm{post},r}$ is the standard deviation of
the corresponding posterior draws
$\hat\Delta_r^{(1)},\ldots,\hat\Delta_r^{(B_{\mathrm{post}})}$, and
$\hat q_{0.025,r}$ and $\hat q_{0.975,r}$ are the corresponding posterior
interval endpoints.

For the misspecified main-effects outcome model when $\gamma_{ZA}>0$,
empirical coverage reflects both the estimated sampling uncertainty and the
systematic plug-in bias induced by omission of the $Z$--$A$ interaction.

\FloatBarrier

\subsection{Supplementary figures}
\label{appendix:suptables}

\begin{figure}[!htbp]
\centering
\includegraphics[width=1.0\textwidth]
{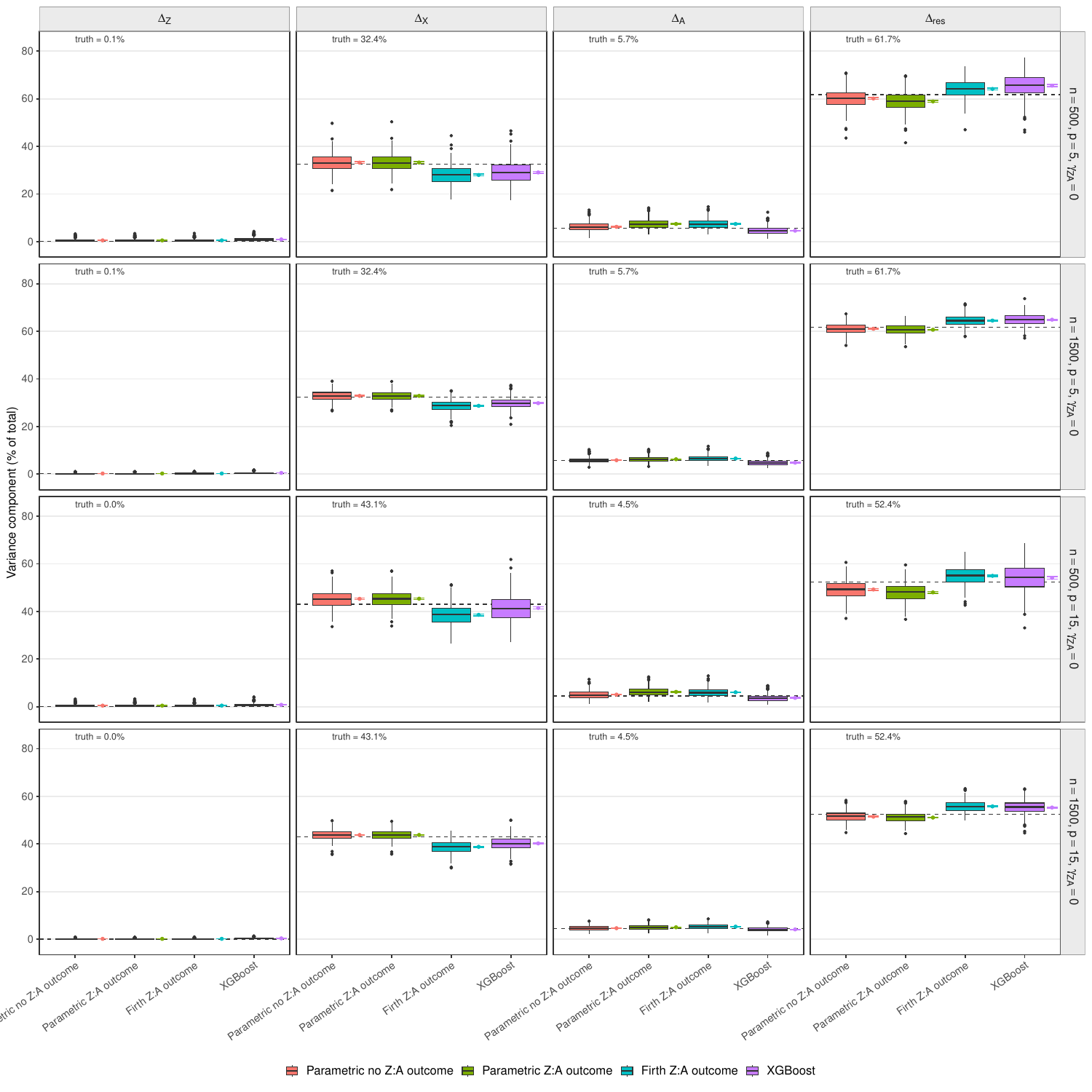}
\caption{
Finite-sample distributions of the estimated topological-order variance
components when $\gamma_{ZA}=0$, expressed as percentages of the total
outcome variance.
The four columns correspond to
$\Delta_Z$, $\Delta_{\boldsymbol X}$, $\Delta_A$, and
$\Delta_{\mathrm{res}}$, and the rows correspond to the four combinations of
sample size $n$ and case-mix dimension $p$.
Boxplots summarize estimates across 500 Monte Carlo replicates.
The horizontal dashed line in each panel denotes the corresponding true
variance component.
The small colored point and interval show the Monte Carlo mean estimate and
its 95\% normal-approximation Monte Carlo error interval.
Results are shown for the parametric main-effects outcome model, the
parametric outcome model including the $Z$-by-$A$ interaction, Firth
logistic regression including the interaction, and XGBoost.
}
\label{fig:supp_topological_gamma0}
\end{figure}

\begin{figure}[!htbp]
\centering
\includegraphics[width=1.0\textwidth]
{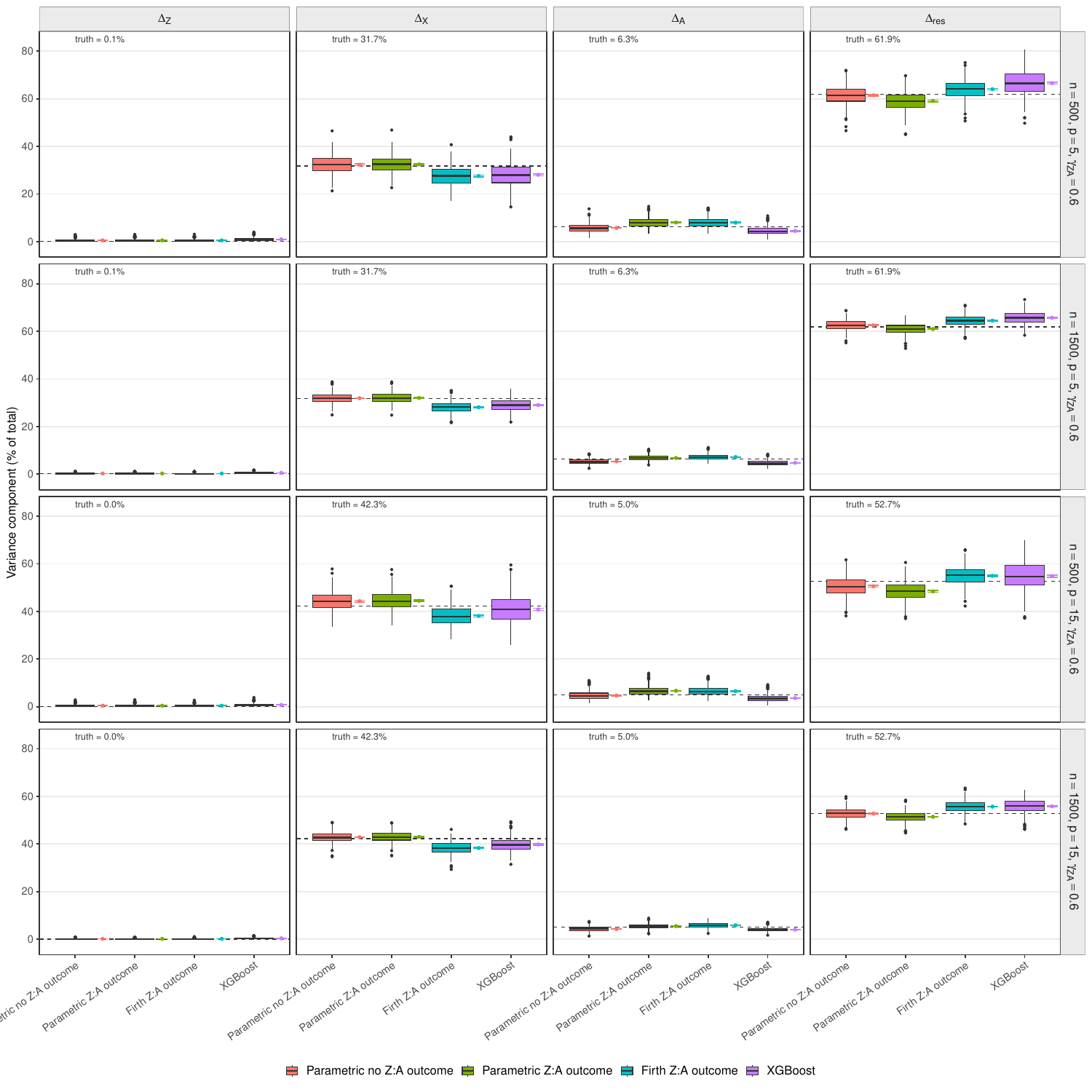}
\caption{
Finite-sample distributions of the estimated topological-order variance
components when $\gamma_{ZA}=0.6$, expressed as percentages of the total
outcome variance.
The four columns correspond to
$\Delta_Z$, $\Delta_{\boldsymbol X}$, $\Delta_A$, and
$\Delta_{\mathrm{res}}$, and the rows correspond to the four combinations of
sample size $n$ and case-mix dimension $p$.
Boxplots summarize estimates across 500 Monte Carlo replicates.
The horizontal dashed line in each panel denotes the corresponding true
variance component.
The small colored point and interval show the Monte Carlo mean estimate and
its 95\% normal-approximation Monte Carlo error interval.
Results are shown for the parametric main-effects outcome model, the
parametric outcome model including the $Z$-by-$A$ interaction, Firth
logistic regression including the interaction, and XGBoost.
}
\label{fig:supp_topological_gamma06}
\end{figure}

\begin{figure}[!htbp]
\centering
\includegraphics[width=1.0\textwidth]
{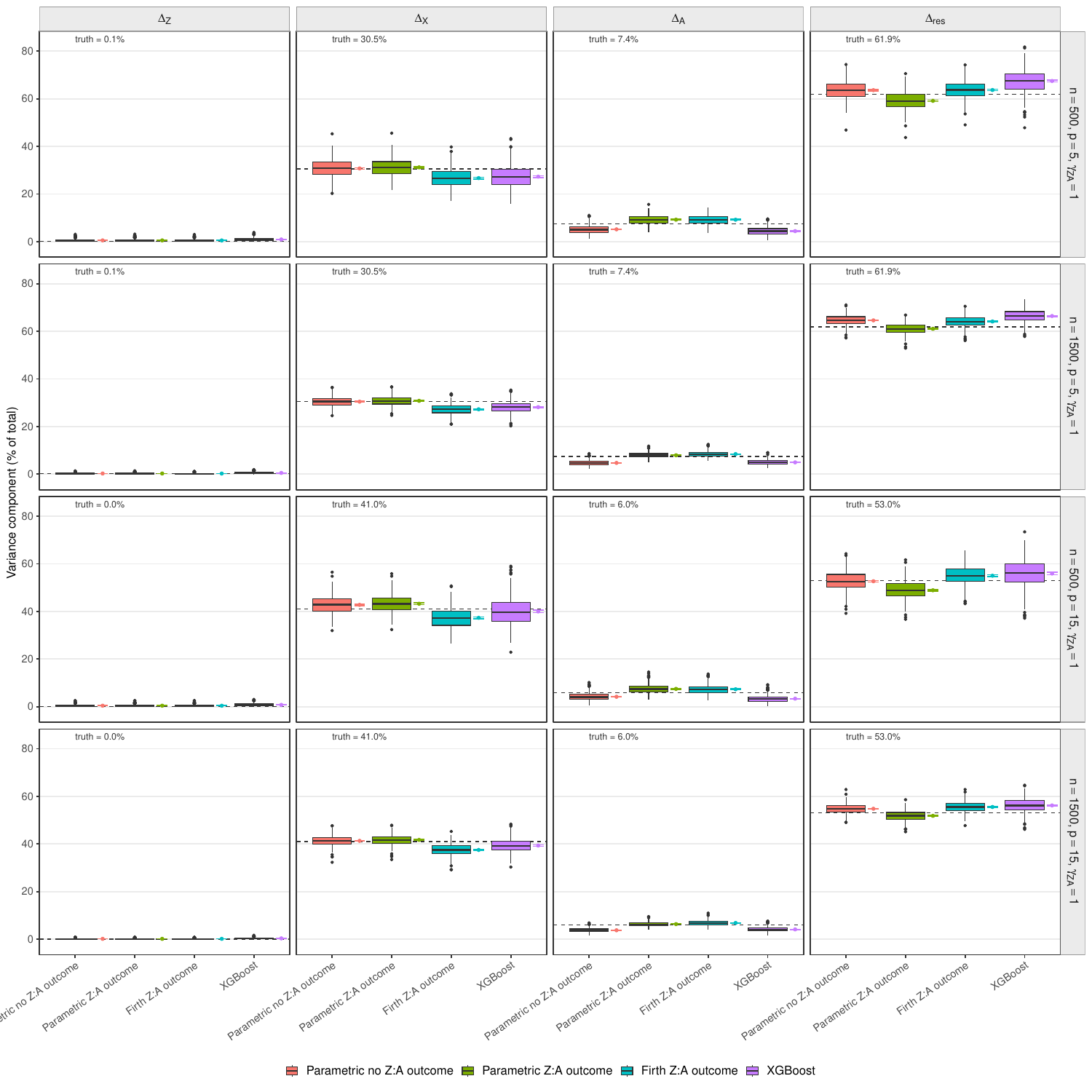}
\caption{
Finite-sample distributions of the estimated topological-order variance
components when $\gamma_{ZA}=1$, expressed as percentages of the total
outcome variance.
The four columns correspond to
$\Delta_Z$, $\Delta_{\boldsymbol X}$, $\Delta_A$, and
$\Delta_{\mathrm{res}}$, and the rows correspond to the four combinations of
sample size $n$ and case-mix dimension $p$.
Boxplots summarize estimates across 500 Monte Carlo replicates.
The horizontal dashed line in each panel denotes the corresponding true
variance component.
The small colored point and interval show the Monte Carlo mean estimate and
its 95\% normal-approximation Monte Carlo error interval.
The parametric main-effects outcome model omits the true $Z$-by-$A$
interaction, whereas the interaction and Firth models include it; XGBoost
provides a flexible prediction-based alternative.
}
\label{fig:supp_topological_gamma1}
\end{figure}

\begin{figure}[!htbp]
\centering
\includegraphics[width=1.0\textwidth]
{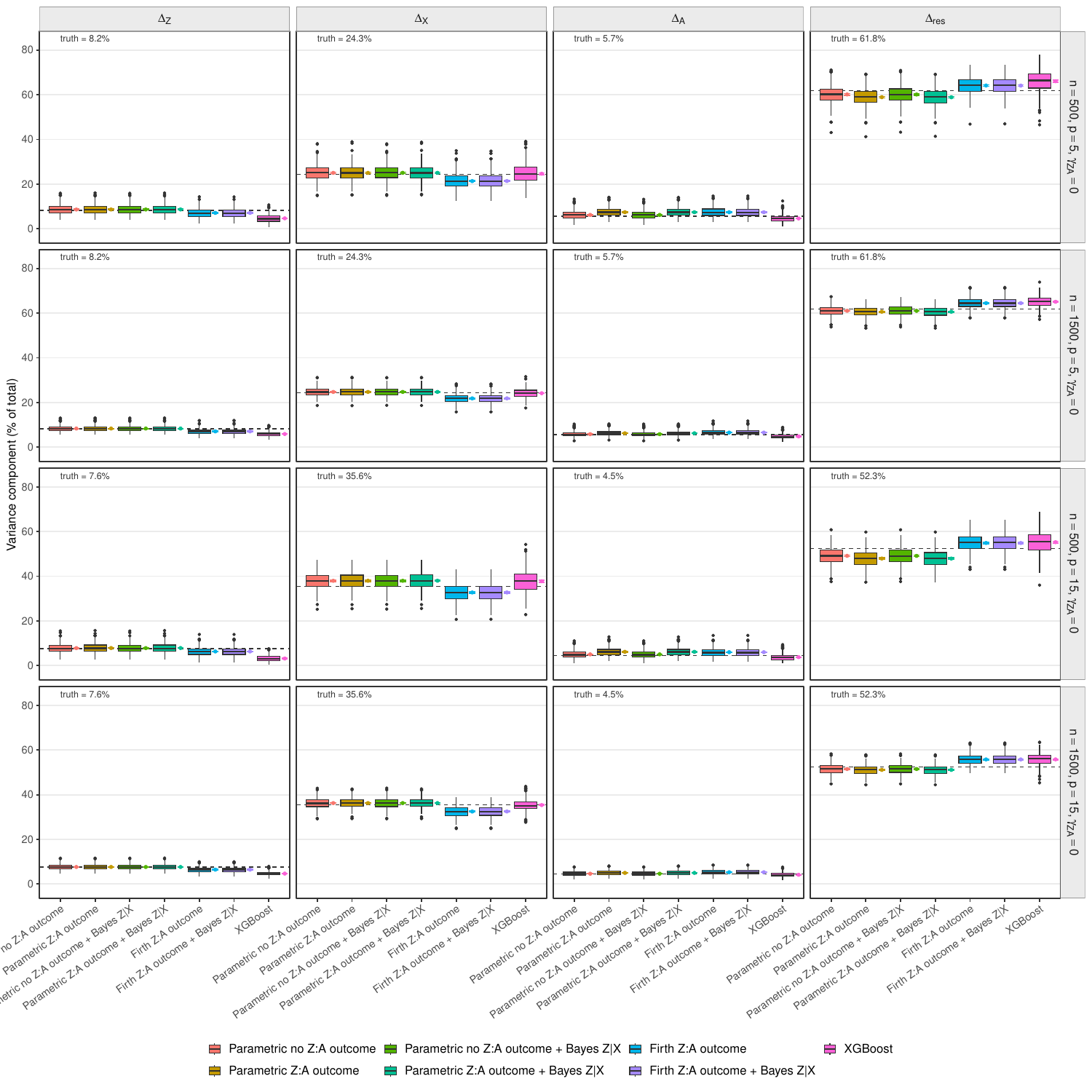}
\caption{
Finite-sample distributions of the estimated modified-order variance
components when $\gamma_{ZA}=0$, expressed as percentages of the total
outcome variance.
The modified ordering is
$\boldsymbol X\prec Z\prec A\prec Y$.
The four columns correspond to
$\Delta_Z$, $\Delta_{\boldsymbol X}$, $\Delta_A$, and
$\Delta_{\mathrm{res}}$, and the rows correspond to the four combinations of
sample size $n$ and case-mix dimension $p$.
Boxplots summarize estimates across 500 Monte Carlo replicates.
The horizontal dashed line in each panel denotes the corresponding true
variance component.
The small colored point and interval show the Monte Carlo mean estimate and
its 95\% normal-approximation Monte Carlo error interval.
For the parametric main-effects, parametric interaction, and Firth outcome
models, results are shown using both direct modeling of
$Z\mid\boldsymbol X$ and Bayes inversion from a fitted model for
$\boldsymbol X\mid Z$; XGBoost uses the direct modified-order
implementation.
}
\label{fig:supp_modified_gamma0}
\end{figure}

\begin{figure}[!htbp]
\centering
\includegraphics[width=1.0\textwidth]
{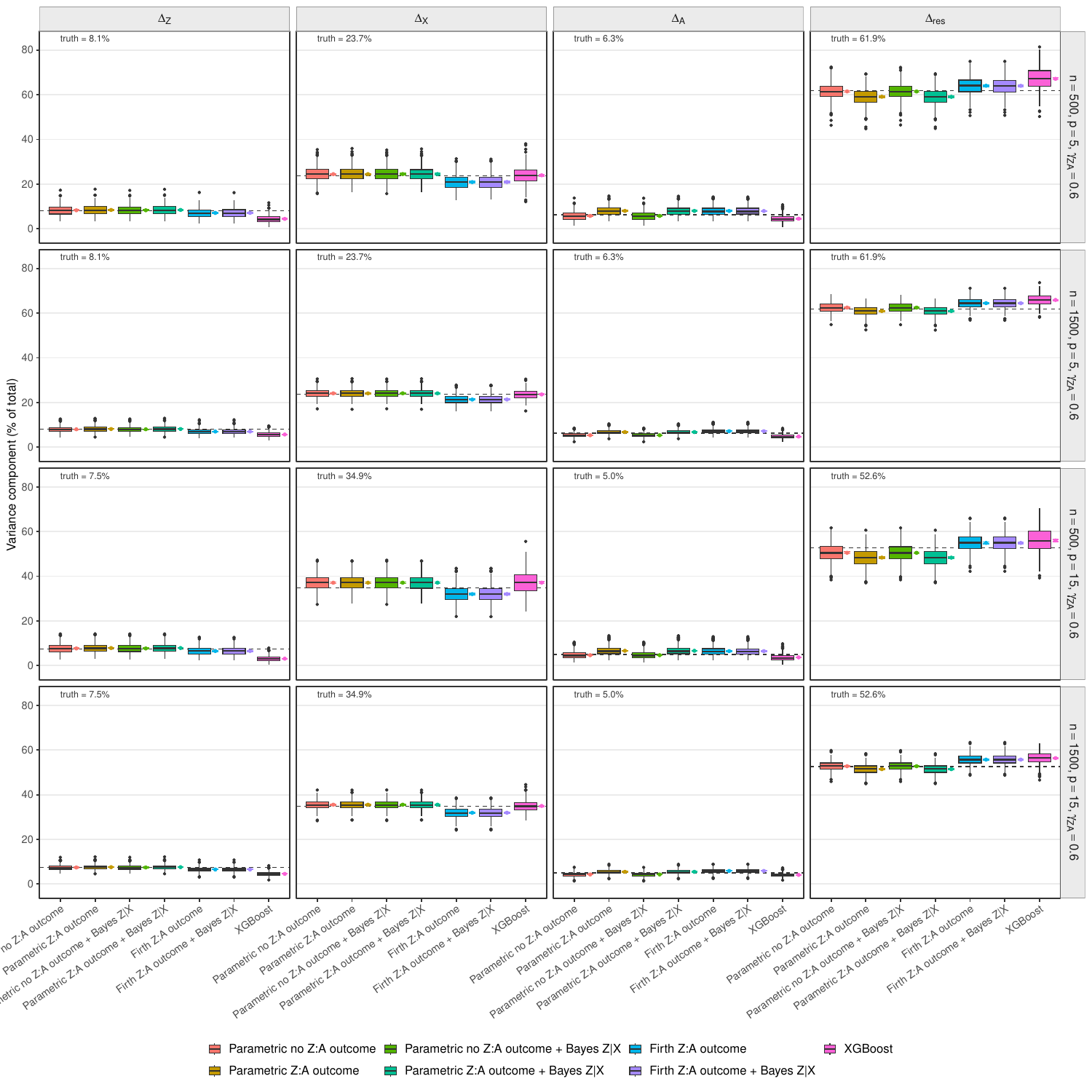}
\caption{
Finite-sample distributions of the estimated modified-order variance
components when $\gamma_{ZA}=0.6$, expressed as percentages of the total
outcome variance.
The modified ordering is
$\boldsymbol X\prec Z\prec A\prec Y$.
The four columns correspond to
$\Delta_Z$, $\Delta_{\boldsymbol X}$, $\Delta_A$, and
$\Delta_{\mathrm{res}}$, and the rows correspond to the four combinations of
sample size $n$ and case-mix dimension $p$.
Boxplots summarize estimates across 500 Monte Carlo replicates.
The horizontal dashed line in each panel denotes the corresponding true
variance component.
The small colored point and interval show the Monte Carlo mean estimate and
its 95\% normal-approximation Monte Carlo error interval.
For the parametric main-effects, parametric interaction, and Firth outcome
models, results are shown using both direct modeling of
$Z\mid\boldsymbol X$ and Bayes inversion from a fitted model for
$\boldsymbol X\mid Z$; XGBoost uses the direct modified-order
implementation.
}
\label{fig:supp_modified_gamma06}
\end{figure}

\begin{figure}[!htbp]
\centering
\includegraphics[width=1.0\textwidth]
{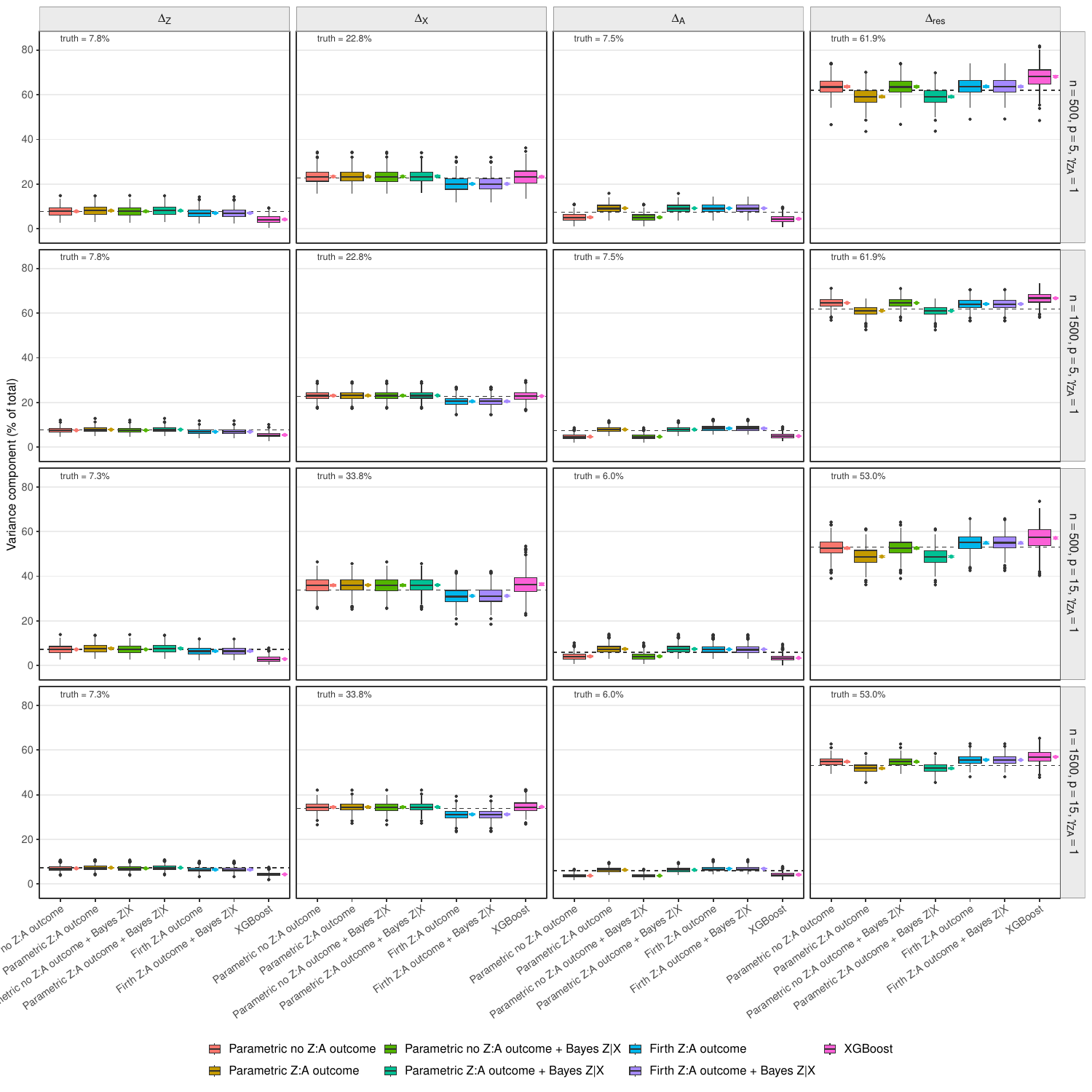}
\caption{
Finite-sample distributions of the estimated modified-order variance
components when $\gamma_{ZA}=1$, expressed as percentages of the total
outcome variance.
The modified ordering is
$\boldsymbol X\prec Z\prec A\prec Y$.
The four columns correspond to
$\Delta_Z$, $\Delta_{\boldsymbol X}$, $\Delta_A$, and
$\Delta_{\mathrm{res}}$, and the rows correspond to the four combinations of
sample size $n$ and case-mix dimension $p$.
Boxplots summarize estimates across 500 Monte Carlo replicates.
The horizontal dashed line in each panel denotes the corresponding true
variance component.
The small colored point and interval show the Monte Carlo mean estimate and
its 95\% normal-approximation Monte Carlo error interval.
The parametric main-effects outcome model omits the true $Z$-by-$A$
interaction, whereas the interaction and Firth models include it.
For each parametric outcome model, results are shown using both direct
modeling of $Z\mid\boldsymbol X$ and Bayes inversion from a fitted model for
$\boldsymbol X\mid Z$; XGBoost uses the direct modified-order
implementation.
}
\label{fig:supp_modified_gamma1}
\end{figure}

\begin{figure}[!htbp]
\centering
\includegraphics[width=0.95\textwidth]
{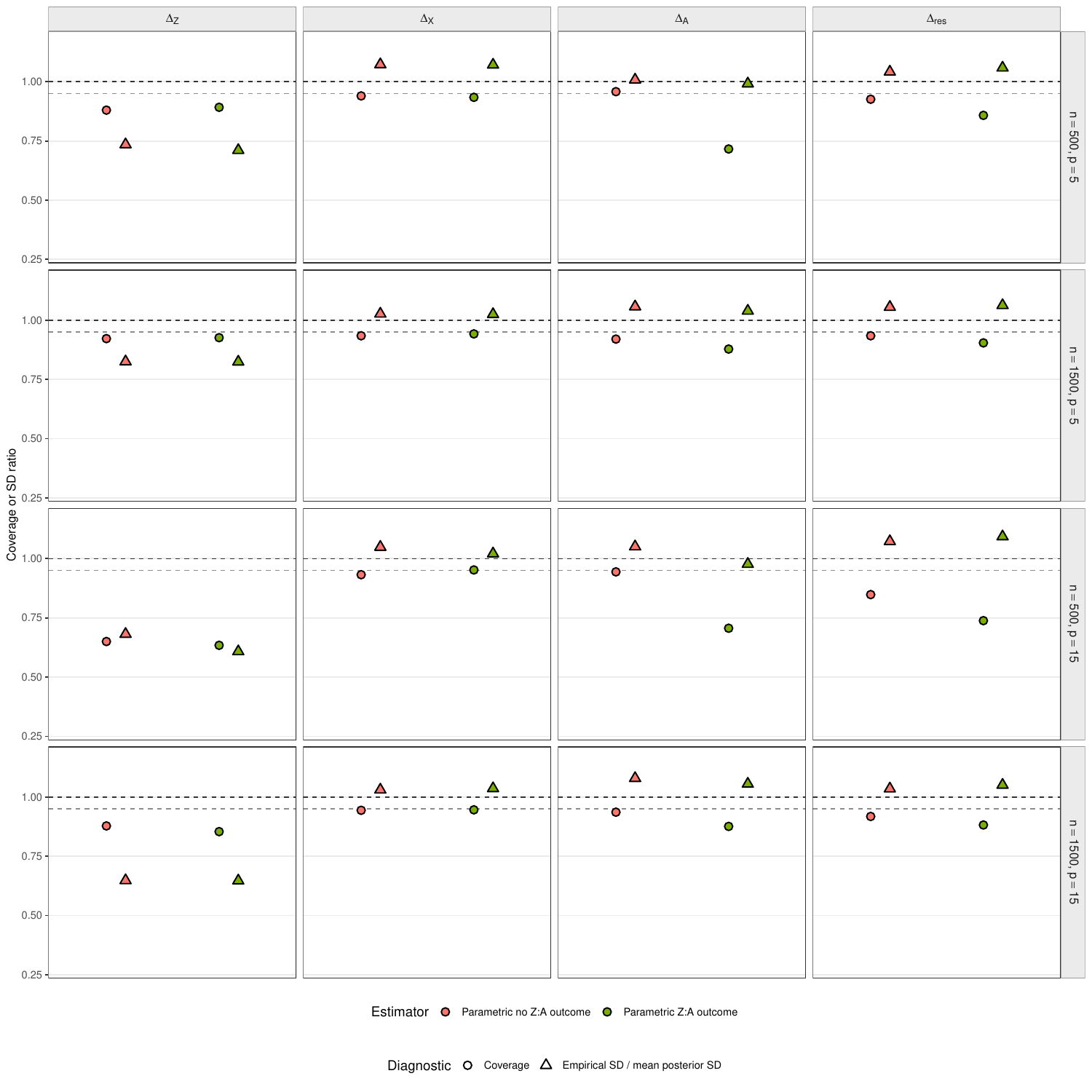}
\caption{
Uncertainty diagnostics for the topological-order decomposition when
$\gamma_{ZA}=0$.
The four columns correspond to the variance components
$\Delta_Z$, $\Delta_{\boldsymbol X}$, $\Delta_A$, and
$\Delta_{\mathrm{res}}$, and the rows correspond to the four combinations of
sample size $n$ and case-mix dimension $p$.
Circles show empirical coverage probabilities of the nominal 95\% approximate
Bayesian posterior intervals across 500 Monte Carlo replicates, and triangles
show the ratio of the empirical standard deviation of the point estimator to
the mean posterior standard deviation.
The horizontal dashed reference lines at 0.95 and 1 correspond to nominal
coverage and agreement between repeated-sampling and posterior standard
deviations, respectively.
Results are shown for the parametric main-effects outcome model and the
parametric outcome model including the $Z$-by-$A$ interaction.
}
\label{fig:supp_topological_uncertainty_gamma0}
\end{figure}

\begin{figure}[!htbp]
\centering
\includegraphics[width=0.95\textwidth]
{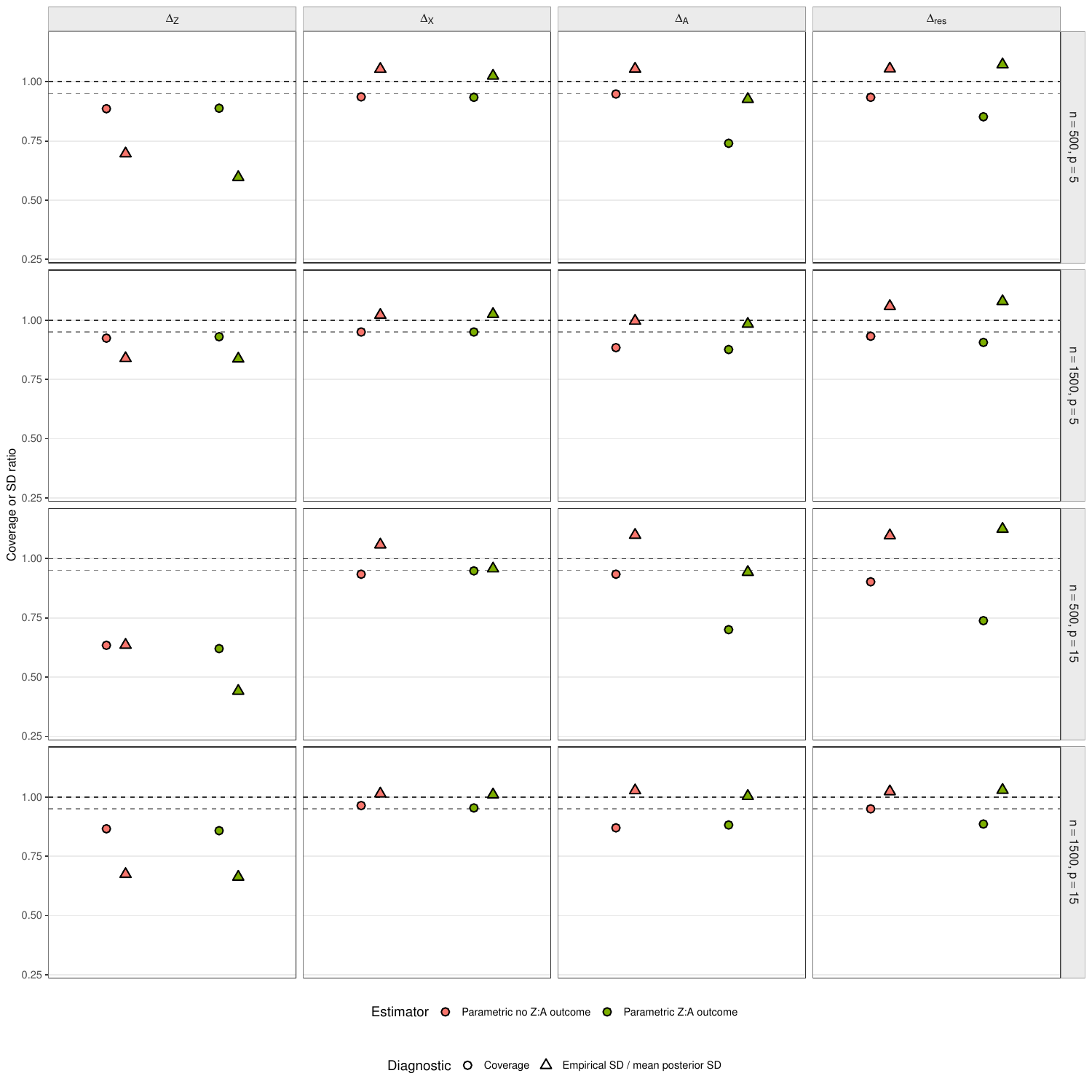}
\caption{
Uncertainty diagnostics for the topological-order decomposition when
$\gamma_{ZA}=0.6$.
The four columns correspond to
$\Delta_Z$, $\Delta_{\boldsymbol X}$, $\Delta_A$, and
$\Delta_{\mathrm{res}}$, and the rows correspond to the four combinations of
sample size $n$ and case-mix dimension $p$.
Circles show empirical coverage probabilities of the nominal 95\% approximate
Bayesian posterior intervals across 500 Monte Carlo replicates, and triangles
show the ratio of the empirical standard deviation of the point estimator to
the mean posterior standard deviation.
The horizontal dashed reference lines at 0.95 and 1 indicate nominal coverage
and agreement between repeated-sampling and posterior standard deviations,
respectively.
The parametric main-effects outcome model omits the true $Z$-by-$A$
interaction, whereas the interaction model includes it.
}
\label{fig:supp_topological_uncertainty_gamma06}
\end{figure}

\begin{figure}[!htbp]
\centering
\includegraphics[width=0.95\textwidth]
{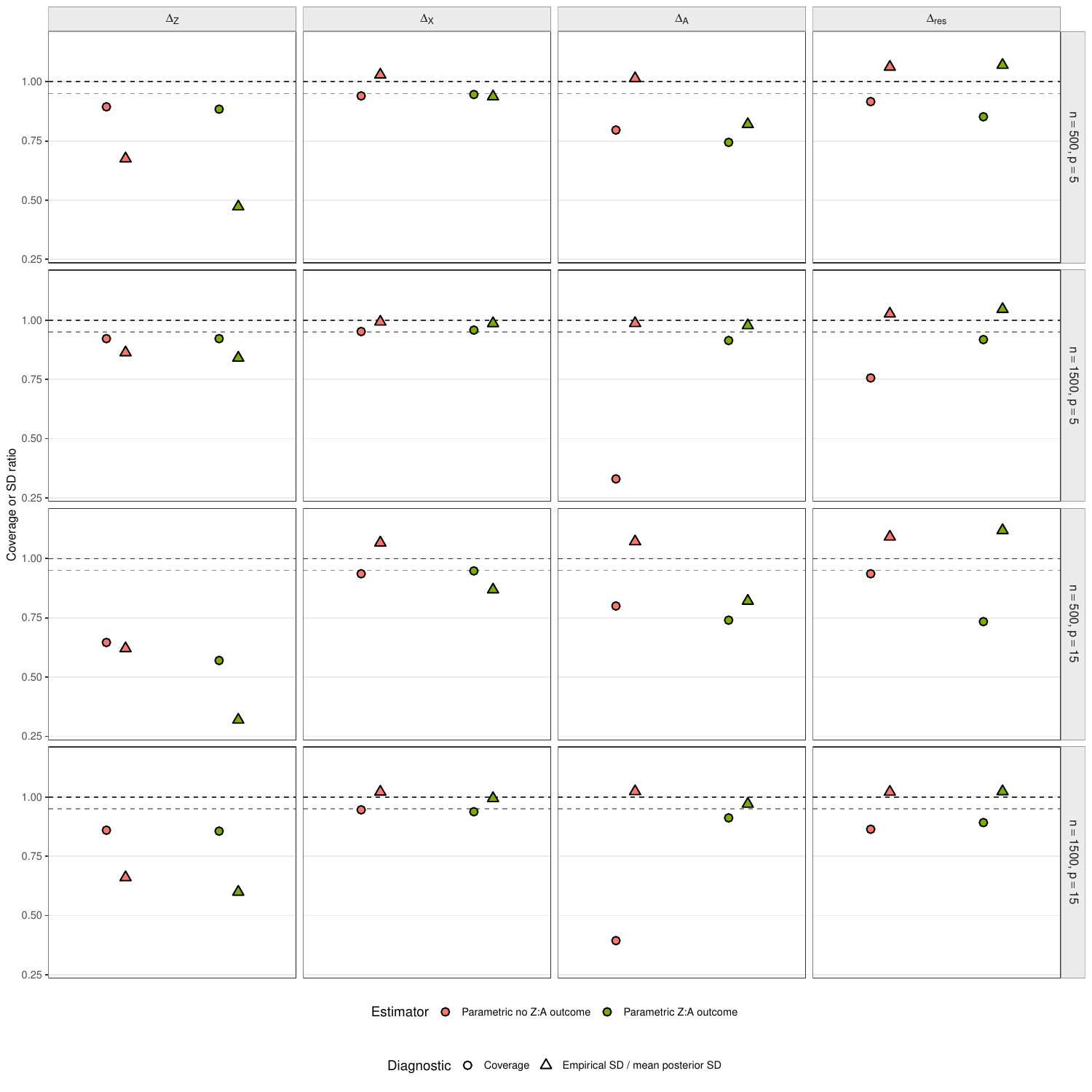}
\caption{
Uncertainty diagnostics for the topological-order decomposition when
$\gamma_{ZA}=1$.
The four columns correspond to
$\Delta_Z$, $\Delta_{\boldsymbol X}$, $\Delta_A$, and
$\Delta_{\mathrm{res}}$, and the rows correspond to the four combinations of
sample size $n$ and case-mix dimension $p$.
Circles show empirical coverage probabilities of the nominal 95\% approximate
Bayesian posterior intervals across 500 Monte Carlo replicates, and triangles
show the ratio of the empirical standard deviation of the point estimator to
the mean posterior standard deviation.
The horizontal dashed reference lines at 0.95 and 1 indicate nominal coverage
and agreement between repeated-sampling and posterior standard deviations,
respectively.
The parametric main-effects outcome model omits the true $Z$-by-$A$
interaction, whereas the interaction model includes it.
}
\label{fig:supp_topological_uncertainty_gamma1}
\end{figure}

\begin{figure}[!htbp]
\centering
\includegraphics[width=0.95\textwidth]
{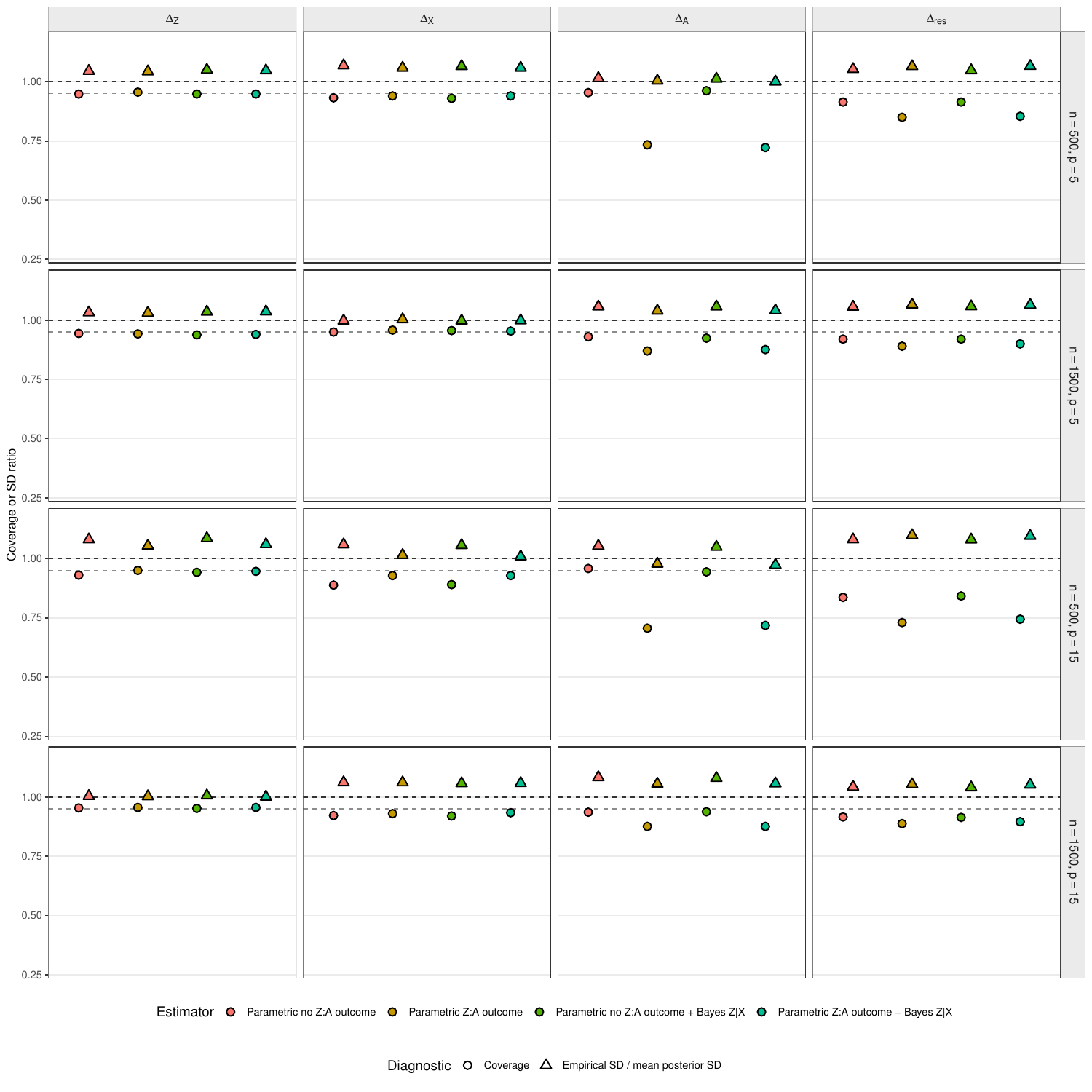}
\caption{
Uncertainty diagnostics for the modified-order decomposition
$\boldsymbol X\prec Z\prec A\prec Y$ when $\gamma_{ZA}=0$.
The four columns correspond to the variance components
$\Delta_Z$, $\Delta_{\boldsymbol X}$, $\Delta_A$, and
$\Delta_{\mathrm{res}}$, and the rows correspond to the four combinations of
sample size $n$ and case-mix dimension $p$.
Circles show empirical coverage probabilities of the nominal 95\% approximate
Bayesian posterior intervals across 500 Monte Carlo replicates, and triangles
show the ratio of the empirical standard deviation of the point estimator to
the mean posterior standard deviation.
The horizontal dashed reference lines at 0.95 and 1 indicate nominal coverage
and agreement between repeated-sampling and posterior standard deviations,
respectively.
For both the parametric main-effects and $Z$-by-$A$ interaction outcome
models, results are shown using direct modeling of $Z\mid\boldsymbol X$ and
using Bayes inversion from a fitted model for $\boldsymbol X\mid Z$.
}
\label{fig:supp_modified_uncertainty_gamma0}
\end{figure}

\begin{figure}[!htbp]
\centering
\includegraphics[width=0.95\textwidth]
{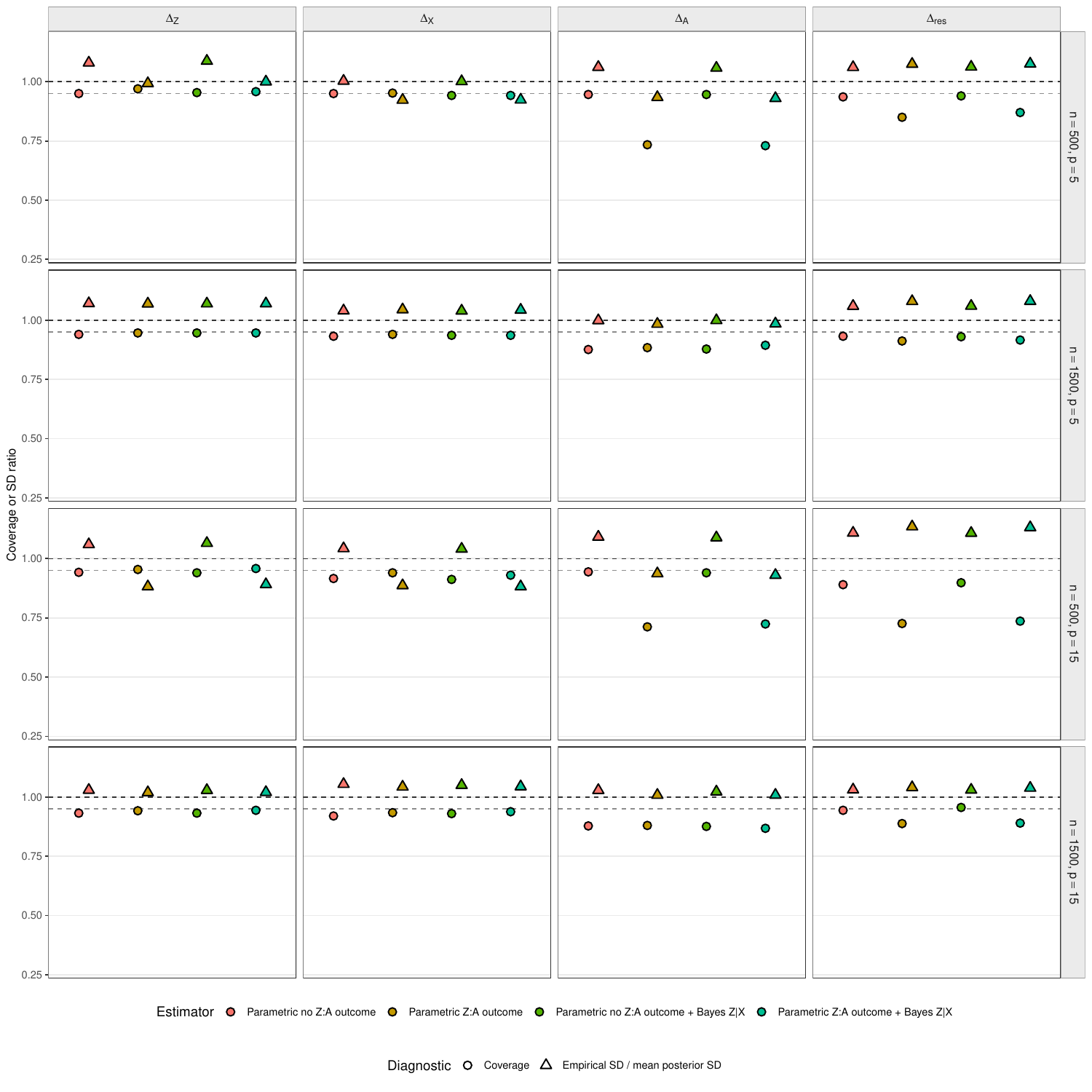}
\caption{
Uncertainty diagnostics for the modified-order decomposition
$\boldsymbol X\prec Z\prec A\prec Y$ when $\gamma_{ZA}=0.6$.
The four columns correspond to
$\Delta_Z$, $\Delta_{\boldsymbol X}$, $\Delta_A$, and
$\Delta_{\mathrm{res}}$, and the rows correspond to the four combinations of
sample size $n$ and case-mix dimension $p$.
Circles show empirical coverage probabilities of the nominal 95\% approximate
Bayesian posterior intervals across 500 Monte Carlo replicates, and triangles
show the ratio of the empirical standard deviation of the point estimator to
the mean posterior standard deviation.
The horizontal dashed reference lines at 0.95 and 1 indicate nominal coverage
and agreement between repeated-sampling and posterior standard deviations,
respectively.
For both outcome-model specifications, results are shown using direct
modeling of $Z\mid\boldsymbol X$ and Bayes inversion from a fitted model for
$\boldsymbol X\mid Z$.
The parametric main-effects outcome model omits the true $Z$-by-$A$
interaction, whereas the interaction model includes it.
}
\label{fig:supp_modified_uncertainty_gamma06}
\end{figure}

\begin{figure}[!htbp]
\centering
\includegraphics[width=0.95\textwidth]
{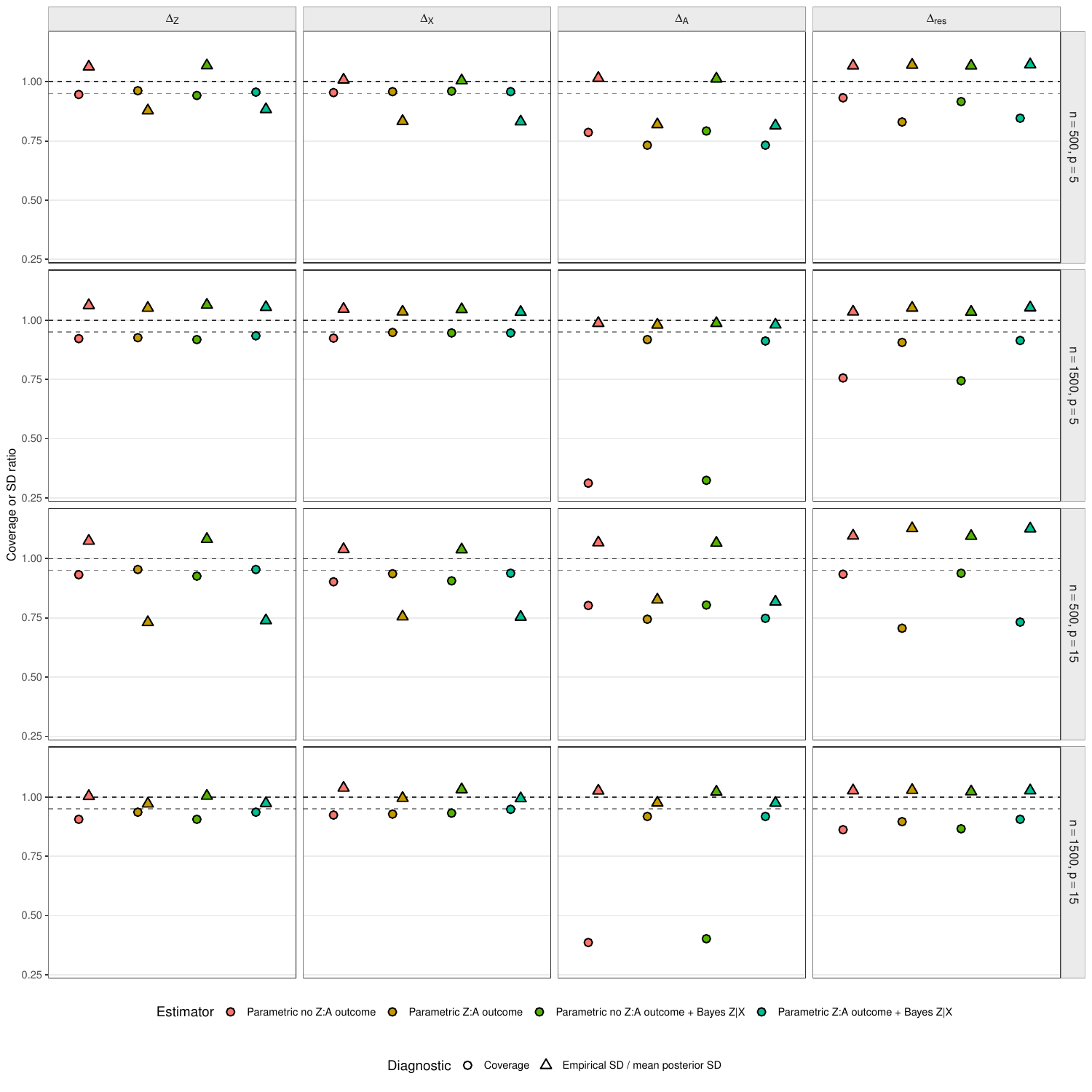}
\caption{
Uncertainty diagnostics for the modified-order decomposition
$\boldsymbol X\prec Z\prec A\prec Y$ when $\gamma_{ZA}=1$.
The four columns correspond to
$\Delta_Z$, $\Delta_{\boldsymbol X}$, $\Delta_A$, and
$\Delta_{\mathrm{res}}$, and the rows correspond to the four combinations of
sample size $n$ and case-mix dimension $p$.
Circles show empirical coverage probabilities of the nominal 95\% approximate
Bayesian posterior intervals across 500 Monte Carlo replicates, and triangles
show the ratio of the empirical standard deviation of the point estimator to
the mean posterior standard deviation.
The horizontal dashed reference lines at 0.95 and 1 indicate nominal coverage
and agreement between repeated-sampling and posterior standard deviations,
respectively.
For both outcome-model specifications, results are shown using direct
modeling of $Z\mid\boldsymbol X$ and Bayes inversion from a fitted model for
$\boldsymbol X\mid Z$.
The parametric main-effects outcome model omits the true $Z$-by-$A$
interaction, whereas the interaction model includes it.
}
\label{fig:supp_modified_uncertainty_gamma1}
\end{figure}

\end{document}